\newcommand{\degree}{\mathop{\mathrm{degree}}}
\newcommand{\microop}{$\mu$op}
\newcommand{\microops}{\microop{}s}
\newcommand\given[1][]{\:#1\lvert\:}
\begin{document}


\title{Faster 64-bit universal hashing using carry-less multiplications}

\author{Daniel Lemire         \and
        Owen Kaser 
}

\institute{D. Lemire \at LICEF Research Center, TELUQ, Universit\'e du Qu\'ebec \\
Montreal, QC Canada \\
\email{lemire@gmail.com}
\and
O. Kaser \at 
 Dept.\ of CSAS, University of New Brunswick, Saint John\\
  Saint John, NB Canada\\
\email{o.kaser@computer.org}
}

\date{}

\maketitle

\begin{abstract}
Intel and AMD support the Carry-less Multiplication (CLMUL) instruction set in their  x64 processors. 
We use CLMUL to implement an almost universal 64-bit hash family (\textsc{CLHASH}). 
We compare this new family with what might be the fastest
almost universal family  on x64 processors (\textsc{VHASH}). We find that \textsc{CLHASH} is at least  \SI{60}{\percent} faster. 
We also compare \textsc{CLHASH} with a popular hash function designed for speed (Google's CityHash). We find that \textsc{CLHASH} is \SI{40}{\percent} faster than CityHash on inputs larger than 64~bytes and just as fast otherwise.
\end{abstract}

\keywords{Universal hashing, Carry-less multiplication, Finite field arithmetic}



%
%
%

\section{Introduction}

Hashing is the fundamental operation of mapping data objects to fixed-size hash values. For example, all objects in the Java programming language can be hashed to 32-bit integers. 
Many algorithms and data structures rely on hashing: e.g., authentication codes, Bloom filters and hash tables. We typically assume that given two data objects, the probability that they have the same hash value (called a \emph{collision}) is low.
When this assumption fails, adversaries can negatively impact the performance of these 
data structures or even create denial-of-service attacks. To mitigate such problems, we can pick hash functions at random (henceforth called \emph{random hashing}).

Random hashing is standard in Ruby, Python and Perl. It is allowed explicitly in Java and C++11. 
There are many fast random hash families --- e.g., MurmurHash, Google's CityHash~\cite{cityhash}, SipHash~\cite{siphash2012} and \textsc{VHASH}~\cite{dai2007vhash}.
Cryptographers have also   designed fast hash families with strong theoretical guarantees~\cite{Bernstein2005,MMH1997,krovetz2007message}. However, much of this work predates the introduction of the CLMUL instruction set in commodity x86 processors. Intel and AMD added CLMUL and its \texttt{pclmulqdq} instruction to their processors to accelerate some common
cryptographic operations.
Although the \texttt{pcl\-mul\-qdq} instruction first became available
in 2010, its high cost in terms of CPU cycles --- specifically an 8-cycle throughput on pre-Haswell Intel microarchitectures and a 7-cycle throughput on pre-Jaguar AMD microarchitectures --- limited its usefulness outside of cryptography.
  However, the throughput of the instruction on the newer Haswell architecture is down to 2~cycles, even though it remains a high latency operation (7 cycles)~\cite{fog2014instruction,intelintrin}.\footnote{The  low-power AMD Jaguar microarchitecture does even better with a throughput of 1~cycle and a latency of 3~cycles. 
}  See Table~\ref{ref:simdinstructions}. 
  Our main contribution is to show that the \texttt{pclmulqdq} instruction can be used to produce a 64-bit string hash family that is faster
than known approaches
 while offering stronger theoretical guarantees.

\begin{table*}
\caption{\label{ref:simdinstructions}Relevant SIMD intrinsics and instructions on  \emph{Haswell} Intel processors, with latency and reciprocal throughput in CPU cycles per instruction~\cite{fog2014instruction,intelintrin}. 
 }\centering
\begin{tabular}{lcp{1.6in}cc}
\toprule
intrinsic & instruction & description & latency & rec.\ thr. 
\\\midrule
\_mm\_clmulepi64\_si128 & \texttt{pclmulqdq} & 64-bit carry-less multiplication & 7 &2 \\
\_mm\_or\_si128 & \texttt{por} & bitwise OR & 1 & 0.33\\ 
\_mm\_xor\_si128 &\texttt{pxor} & bitwise XOR & 1 & 0.33\\ 
\_mm\_slli\_epi64 & \texttt{psllq} & shift left two 64-bit integers & 1 & 1 \\
\_mm\_srli\_si128 & \texttt{psrldq} & shift right by $x$ bytes & 1 & 0.5\\
\_mm\_shuffle\_epi8 & \texttt{pshufb} & shuffle 16 bytes & 1 & 0.5\\
\_mm\_cvtsi64\_si128 & \texttt{movq} & 64-bit integer as  128-bit reg. & 1 & --\\
\_mm\_cvtsi128\_si64 & \texttt{movq} & 64-bit integer from 128-bit reg. & 2 & --\\
\_mm\_load\_si128 & \texttt{movdqa} & load a 128-bit reg.\ from memory (aligned) & 1 & 0.5\\
\_mm\_lddqu\_si128 & \texttt{lddqu} & load a 128-bit reg.\ from memory (unaligned) & 1 & 0.5\\
\_mm\_setr\_epi8 & -- & construct 128-bit reg.\ from 16~bytes & -- & --\\
\_mm\_set\_epi64x & -- & construct 128-bit reg.\ from two 64-bit integers & -- & --\\
\bottomrule
\end{tabular}
\end{table*}

\section{Random Hashing}

In random hashing, we pick a hash function at random from some family, 
whereas an adversary might pick the data inputs.
We want distinct  objects to be unlikely to hash to the same value. That is, we want a low collision probability.

We consider hash functions from $X$ to $[0,2^L)$.  An $L$-bit family  is 
  \emph{universal}~\cite{carter1979universal,Cormen:2009:IAT:1614191} if the probability of a collision is no more than $2^{-L}$. That is, it is universal if
 \begin{align*}P\left (h(x)=h(x')\right )\leq 2^{-L}\end{align*}
for any fixed $x,x' \in X$ such that $x \neq x'$, given that we pick $h$ at random from the family.
It is \emph{$\epsilon$-almost universal}~\cite{188765} (also written $\epsilon$-AU) if 
the probability of a collision is bounded by $\epsilon$.  I.e., 
 $P\left (h(x)=h(x')\right )\leq \epsilon$,
for any $x,x' \in X$ such that $x \neq x'$. (See Table~\ref{tab:notation}.)

\begin{table}
\caption{\label{tab:notation} Notation and basic definitions }
\centering
\begin{tabular}{l|p{4cm}} \toprule
$h: X \to \{0,1,\ldots, 2^L-1\}$  & $L$-bit hash function\\[5pt]
universal & $P\left (h(x)=h(x')\right )\leq 1/2^{L}$
for $x \neq x'$\\[5pt]
$\epsilon$-almost universal & $P\left (h(x)=h(x')\right )\leq \epsilon$
for $x \neq x'$\\[5pt]%
XOR-universal & $P\left (h(x) = h(x') \oplus  c \right ) \leq 1/2^{L}$ for any $c \in [0,2^L)$ and distinct $x,x' \in X$ \\[5pt] 
$\epsilon$-almost XOR-universal & $P\left (h(x) = h(x') \oplus  c   \right ) \leq \epsilon$ for any integer $c \in [0,2^L)$ and distinct $x,x' \in X$ \\
\bottomrule
\end{tabular}
\end{table}

\subsection{Safely Reducing Hash Values}

Almost universality can be insufficient to prevent frequent collisions since a given algorithm might only use the first few bits of the hash values. Consider hash tables. A hash table might use as a key only the first $b$~bits of the hash values when its capacity is $2^b$. Yet even if a hash family is $\epsilon$-almost universal, it could still have a high collision probability on the first few bits.

For example, take any $32$-bit universal family $\mathcal{H}$, and derive the new $64$-bit $1/2^{32}$-almost universal $64$-bit family 
by taking the functions from $\mathcal{H}$ and multiplying them by $2^{32}$:
 $h'(x) = h(x) \times  2^{32}$. Clearly, all functions from this new family collide with probability 1 on the first 32~bits, even though the collision probability on the full hash values is low ($1/2^{32}$). 
Using the first bits of
these hash functions could have disastrous consequences in the implementation of a hash table.

Therefore, 
we consider stronger forms of universality.
\begin{itemize}
\item A family is  \emph{$\Delta$-universal}~\cite{stinson1996connections,squarehash} if 
\begin{align*}P(h(x) = h(x') + c \bmod 2^{L}) \leq 2^{-L}\end{align*} for any constant $c$
and any $x,x' \in X$ such that $x \neq x'$. It is  $\epsilon$-almost $\Delta$-universal  if
$P(h(x) = h(x') + c  \bmod  2^{L} \leq \epsilon$ for any constant $c$ and any $x,x' \in X$ such that $x \neq x'$. 
\item A family  is 
$\epsilon$-almost XOR-universal  if
\begin{align*}P\left (h(x) = h(x') \oplus  c   \right ) \leq \epsilon\end{align*} for any integer constant $c\in [0,2^L)$ and any $x,x' \in X$ such that $x \neq x'$ (where $\oplus$ is the bitwise XOR). A family that is 
$1/2^{L}$-almost XOR-universal is said to be  XOR-universal~\cite{stinson1996connections}.
\end{itemize}

Given an $\epsilon$-almost $\Delta$-universal family $\mathcal{H}$ of hash functions  $h:X \to [0,2^L)$, the family of hash functions
\begin{align*}\{ h(x) \bmod 2^{L'} \given  h \in \mathcal{H}\}\end{align*} from $X$ to $[0,2^{L'})$ is 
$2^{L-L'} \times \epsilon$-almost $\Delta$-universal~\cite{dai2007vhash}. The next lemma shows that a similar result applies to almost XOR-almost universal families.

\begin{lemma}\label{lemma:xorismarvellous}
Given an $\epsilon$-almost XOR-universal family $\mathcal{H}$ of hash functions  $h:X \to [0,2^L)$ and any positive integer $L'<L$, the family of hash functions
$\{ h(x) \bmod 2^{L'} \given  h \in \mathcal{H}\}$ from $X$ to $[0,2^{L'})$ is 
$2^{L-L'} \times  \epsilon$-almost XOR-universal.
\end{lemma}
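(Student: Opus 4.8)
The plan is to reduce the claim about the truncated family back to the XOR-universality hypothesis on the original family $\mathcal{H}$ by summing over the cosets that get identified under truncation. Fix distinct $x,x' \in X$ and an arbitrary constant $c' \in [0,2^{L'})$; I need to bound $P\left(h(x) \bmod 2^{L'} = (h(x') \bmod 2^{L'}) \oplus c'\right)$ over a random $h \in \mathcal{H}$. The first step is to observe that, because $c' < 2^{L'}$, XOR-ing by $c'$ only touches the low $L'$ bits, so the event ``$h(x) \bmod 2^{L'} = (h(x') \bmod 2^{L'}) \oplus c'$'' is equivalent to ``$(h(x) \oplus c') \bmod 2^{L'} = h(x') \bmod 2^{L'}$'', i.e.\ the low $L'$ bits of $h(x) \oplus c'$ and of $h(x')$ agree.

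Next I would decompose that event over the possible high-order bits. The event ``low $L'$ bits of $h(x)\oplus c'$ equal low $L'$ bits of $h(x')$'' is the disjoint union, over all $d \in [0, 2^{L-L'})$, of the events $E_d := \{ h(x) = h(x') \oplus (c' + d\cdot 2^{L'}) \}$, where the ``carry-free'' structure matters: since $c' < 2^{L'}$, the integer $c := c' + d \cdot 2^{L'}$ satisfies $c \in [0,2^L)$ and its low $L'$ bits are exactly $c'$, so $h(x) = h(x') \oplus c$ forces the low-$L'$-bit agreement, and conversely every $h$ realizing that agreement does so for exactly one such $c$ (namely the one whose high bits are $h(x) \oplus h(x')$ shifted down). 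Hence by a union bound,
\begin{align*}
P\left(h(x) \bmod 2^{L'} = (h(x') \bmod 2^{L'}) \oplus c'\right) = \sum_{d=0}^{2^{L-L'}-1} P(E_d) \leq \sum_{d=0}^{2^{L-L'}-1} \epsilon = 2^{L-L'}\,\epsilon,
\end{align*}
where each $P(E_d) \leq \epsilon$ comes directly from the $\epsilon$-almost XOR-universality of $\mathcal{H}$ applied with the constant $c = c' + d\cdot 2^{L'} \in [0,2^L)$.

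The only genuinely delicate point — the step I would be most careful about — is the claim that XOR interacts with truncation as cleanly as addition does with the $\bmod$-based $\Delta$-universal version, i.e.\ verifying that the low $L'$ bits of $h(x) \oplus c$ depend only on the low $L'$ bits of $h(x)$ and $c$, with no carry propagation from higher bits (true for XOR, unlike ordinary addition) and that this is exactly why we need $c' < 2^{L'}$. Once that bit-level bookkeeping is pinned down, the disjointness of the $E_d$ and the union bound are routine, and the factor $2^{L-L'}$ is simply the number of admissible high-order completions $d$. I would present the argument in the order above: first the equivalence rewriting the truncated-XOR event, then the exact decomposition into the $2^{L-L'}$ events $E_d$, then the union bound invoking the hypothesis.
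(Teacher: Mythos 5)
Your proof is correct and follows essentially the same route as the paper's: the paper decomposes the truncated-collision event into the $2^{L-L'}$ disjoint events $h(x) = h(x') \oplus c \oplus z$ over the values $z$ with $z \bmod 2^{L'} = 0$, which is exactly your family $E_d$ with $z = d\cdot 2^{L'}$ (and $c' \oplus z = c' + d\cdot 2^{L'}$ since the supports are disjoint), then applies almost-XOR-universality to each term. Your extra care about the carry-free interaction of XOR with truncation is a sound elaboration of a step the paper leaves implicit, not a different argument.
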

\begin{proof}
For any integer constant $c\in [0,2^L)$, consider the equation $h(x)  = (h(x') \oplus c )\bmod 2^{L'}$ for $x \neq x'$ with $h$ picked from  $\mathcal{H}$. Pick any positive integer $L'<L$. We have \begin{align*}P( &h(x)    = (h(x') \oplus c  \mod 2^{L'}) ) \\
&   =   \sum_{z \given z \bmod 2^{L'}=0 } P(h(x) = h(x')\oplus c \oplus z )\end{align*} where the sum is over $2^{L-L'}$ distinct $z$ values. Because $\mathcal{H}$ is $\epsilon$-almost XOR-universal, we have that
$P(h(x) = h(x')\oplus c \oplus z ) \leq \epsilon$ for any $c$ and any $z$.
Thus, we have that $P( h(x)    = h(x') \oplus c  \mod 2^{L'} ) \leq 2^{L-L'} \epsilon$, showing the result.
\end{proof}

It follows from Lemma~\ref{lemma:xorismarvellous} that if a family is XOR-universal, then its modular reductions are XOR-universal as well.

As a straightforward extension of this lemma, we could show that 
when picking any $L'$~bits (not only
the least significant), the result is $2^{L-L'} \times  \epsilon$-almost XOR-universal.

\subsection{Composition}

It can be useful to combine different hash families to create new ones. For example, it is common to compose hash families.
When composing hash functions ($h=g \circ f$), the universality degrades linearly: if $g$ is picked from an $\epsilon_g$-almost universal family and
$f$ is picked (independently) from 
an $\epsilon_f$-almost universal family, the result is $\epsilon_g+\epsilon_f$-almost universal~\cite{188765}.

We sketch the proof. For $x\neq x'$, we have that $g(f(x))=g(f(x'))$ collides if $f(x)=f'(x)$. This occurs with probability at most $\epsilon_f $ since $f$ is picked from an $\epsilon_f$-almost universal family. If not, they collide if $g(y)=g(y')$ where $y=f(x)$ and $y'=f(x')$, with probability bounded by $\epsilon_g$. Thus, we have bounded the collision probability by $\epsilon_f +(1-\epsilon_f) \epsilon_g \leq \epsilon_f+\epsilon_g$, establishing the result.

By extension, we can show that if $g$ is picked from an $\epsilon_g$-almost XOR-universal family, then the composed result ($h=g \circ f$) is going to be $\epsilon_g+\epsilon_f$-almost XOR-universal. It is not required for $f$ to be almost XOR-universal.

\subsection{Hashing Tuples}\label{sec:hashingtuples}

If we have universal hash functions from $X$ to $[0,2^L)$, then we can construct hash functions from $X^m$ to $[0,2^L)^m$ while preserving universality. The construction is straightforward: $h'(x_1, x_2,\ldots, x_m)= (h(x_1), h(x_2),\ldots, h(x_m))$. If $h$ is picked from  an $\epsilon$-almost universal family, then the result is $\epsilon$-almost universal. This is true even though a single $h$ is picked and reused $m$~times.

\begin{lemma}\label{lemma:hashingtuples}Consider an $\epsilon$-almost universal family $\mathcal{H}$ from $X$ to $[0,2^L)$. Then 
consider
the family of functions 
$\mathcal{H}'$ 
of the form $h'(x_1, x_2,\ldots, x_m)= (h(x_1), h(x_2),\ldots, h(x_m))$ from $X^m$ to $[0,2^L)^m$,  where $h$ is in  $\mathcal{H}$.
Family $\mathcal{H}'$
 is  $\epsilon$-almost universal.
\end{lemma}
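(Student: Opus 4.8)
The plan is to show directly that for any two distinct tuples the collision probability is bounded by $\epsilon$, using the fact that two tuples differ iff they differ in at least one coordinate. First I would fix distinct $\mathbf{x} = (x_1,\ldots,x_m)$ and $\mathbf{x}' = (x_1',\ldots,x_m')$ in $X^m$. Since $\mathbf{x} \neq \mathbf{x}'$, there exists an index $j$ with $x_j \neq x_j'$. The key observation is that the event $h'(\mathbf{x}) = h'(\mathbf{x}')$ is the intersection over all coordinates $i$ of the events $h(x_i) = h(x_i')$, so in particular it is contained in the single event $h(x_j) = h(x_j')$. Hence $P(h'(\mathbf{x}) = h'(\mathbf{x}')) \leq P(h(x_j) = h(x_j'))$.

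Next I would invoke the hypothesis that $\mathcal{H}$ is $\epsilon$-almost universal: since $x_j \neq x_j'$ and $h$ is drawn at random from $\mathcal{H}$, we have $P(h(x_j) = h(x_j')) \leq \epsilon$. Combining with the previous inequality gives $P(h'(\mathbf{x}) = h'(\mathbf{x}')) \leq \epsilon$, which is exactly the claim. The one subtlety worth spelling out is that the \emph{same} function $h$ is reused in every coordinate, so the coordinate events $h(x_i) = h(x_i')$ are in general \emph{not} independent; but this does not matter, because we only need the containment in one of these events, not a product bound. In fact dependence can only help, never hurt, for an upper bound of this kind.

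I do not anticipate a real obstacle here; the proof is a one-line monotonicity-of-probability argument once the right event inclusion is identified. The only thing to be careful about is not to be tempted into a union-bound or product-bound over all $m$ coordinates (which would give a weaker $m\epsilon$-type bound or would wrongly require independence): the correct and tight argument discards all coordinates except one witnessing index $j$. So the write-up will be: fix the tuples, extract a differing coordinate, note the event inclusion, apply $\epsilon$-almost universality in that coordinate, done.
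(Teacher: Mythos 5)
Your proof is correct and follows essentially the same route as the paper: identify a coordinate $j$ where the tuples differ, observe that a collision of $h'$ implies $h(x_j)=h(x'_j)$, and apply the $\epsilon$-almost universality of $\mathcal{H}$ at that coordinate. Your explicit remark that the coordinate events need not be independent (and that only the event inclusion is used) is a useful clarification but does not change the argument.
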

The proof is not difficult. 
Consider two distinct values from $X^m$, $x_1, x_2,\ldots, x_m$ and $x'_1, x'_2,\ldots, x'_m$. Because the tuples are distinct, they must differ in at least one component: $x_i \neq x'_i$. It follows that 
$h'(x_1, x_2,\ldots, x_m)$ and $h'(x'_1, x'_2,\ldots, x'_m)$
collide with probability at most $P(h(x_i)=h(x'_i))\leq \epsilon$, showing the result.

\subsection{Variable-Length Hashing From Fixed-Length Hashing}\label{sec:var}

Suppose that  we are given a family  $\mathcal{H}$  of hash functions that is XOR universal over fixed-length strings. That is, we have that
$P\left (h(s) = h(s') \oplus  c \right ) \leq 1/2^{L}$ if the length of $s$ is the same as the length of $s'$ ($|s|=|s'|$). We can create a new family that is XOR universal
over variable-length strings by introducing a hash family on string lengths. Let $\mathcal{G}$ be a family of XOR universal hash functions $g$ over length values. Consider the new family of hash functions of the form $h(s) \oplus  g (|s|)$ where $h \in \mathcal{H}$ and $g \in \mathcal{G}$. Let us consider two distinct strings $s$ and $s'$. There are two cases to consider.
\begin{itemize}
\item If $s$ and $s'$ have the same length so that $g (|s|)=g (|s'|)$ then we have XOR universality since \begin{align*}&P\left (h(s) \oplus  g (|s|) = h(s')\oplus  g (|s'|) \oplus  c \right ) \\&= P\left (h(s)  = h(s')\oplus  c \right )\\ &\leq 1/2^{L}\end{align*} where the last inequality follows because $h\in \mathcal{H}$, an XOR universal family over fixed-length strings.
\item If the strings have different lengths ($|s|\neq|s'|$), then we again have XOR universality because
\begin{align*}&P\left (h(s) \oplus  g (|s|) = h(s')\oplus  g (|s'|) \oplus  c \right )\\ &= P\left (g (|s|)  = g (|s'|)\oplus ( c \oplus h(s) \oplus h(s'))\right )\\ 
&= P\left (g (|s|)  = g (|s'|)\oplus c'\right )\\
&\leq 1/2^{L}\end{align*}
where we set $c'=c \oplus h(s) \oplus h(s')$, a value independent from $|s|$ and $|s'|$. The last inequality follows because $g$ is taken from a family $\mathcal{G}$ that is XOR universal.
\end{itemize}
Thus the result ($h(s) \oplus  g (|s|)$) is XOR universal. We can also generalize the analysis. Indeed, if $\mathcal{H}$ and $\mathcal{G}$ are $\epsilon$-almost universal, we could show that the result is $\epsilon$-almost universal. We have the following lemma.

\begin{lemma}\label{lemma:fromfixedtovariable}
Let  $\mathcal{H}$  be an XOR universal family of hash functions  over fixed-length strings. Let $\mathcal{G}$ be an XOR universal family of  hash functions over integer values. We have that the family of hash functions of the form $s\to h(s) \oplus  g (|s|)$ where $h \in \mathcal{H}$ and $g \in \mathcal{G}$ is XOR universal over all strings. 

Moreover, if $\mathcal{H}$ and $\mathcal{G}$ are merely $\epsilon$-almost universal, then  the family of hash functions of the form $s\to h(s) \oplus  g (|s|)$ is also $\epsilon$-almost universal. 
\end{lemma}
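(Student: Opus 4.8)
The plan is to mirror the two-case analysis already given for the XOR-universal case, but now tracking the weaker bound $\epsilon$ throughout. First I would fix two distinct strings $s$ and $s'$ and an integer constant $c$, and split on whether $|s| = |s'|$ or $|s| \neq |s'|$, exactly as in the displayed argument preceding the lemma.

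In the equal-length case, $g(|s|) = g(|s'|)$ so these terms cancel (they XOR to zero), and the event $h(s) \oplus g(|s|) = h(s') \oplus g(|s'|) \oplus c$ reduces to $h(s) = h(s') \oplus c$. Since $s$ and $s'$ are distinct strings of the same length and $\mathcal{H}$ is an $\epsilon$-almost XOR-universal family over fixed-length strings, this event has probability at most $\epsilon$. (Here I would note that "$\epsilon$-almost universal" in the lemma's hypothesis should be read as "$\epsilon$-almost XOR-universal" for the conclusion to make sense — the fixed-length family must supply an XOR bound, just as in the XOR-universal statement; this is the one place where I expect the reader must supply the intended reading.) In the unequal-length case, I would condition on the (independent) choice of $h$ and rewrite the collision event as $g(|s|) = g(|s'|) \oplus c'$ where $c' = c \oplus h(s) \oplus h(s')$ is determined once $h$ is fixed and does not depend on $g$. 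Since $|s| \neq |s'|$ and $\mathcal{G}$ is $\epsilon$-almost XOR-universal over integers, for every fixed value of $c'$ this has probability at most $\epsilon$; averaging over the choice of $h$ keeps the bound at $\epsilon$.

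Combining the two cases, the collision probability is at most $\epsilon$ in all situations, which is the claim; the XOR-universal conclusion is the special case $\epsilon = 1/2^L$, already spelled out in the text, so I could either present the general argument and specialize, or simply cite the preceding discussion for the tight case and give the $\epsilon$ version in full. I do not anticipate a genuine obstacle here: the independence of $g$ from $h$ is what makes the substitution $c' = c \oplus h(s) \oplus h(s')$ legitimate, and that independence is part of the construction. The only subtlety worth flagging explicitly is the implicit strengthening of "$\epsilon$-almost universal" to "$\epsilon$-almost XOR-universal" in the hypotheses, which I would state at the outset so the proof reads cleanly.
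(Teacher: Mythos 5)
Your proposal is correct and follows essentially the same two-case argument (equal lengths versus unequal lengths, with the substitution $c' = c \oplus h(s) \oplus h(s')$ and an appeal to the independence of $g$ from $h$) that the paper itself gives in the discussion immediately preceding the lemma. Your remark that the ``$\epsilon$-almost universal'' hypothesis in the second part must be read as ``$\epsilon$-almost XOR-universal'' --- at least for $\mathcal{G}$, since the unequal-length case needs a bound on $P\left(g(|s|) = g(|s'|)\oplus c'\right)$ for a possibly nonzero $c'$ --- is a fair catch of an imprecision the paper passes over with ``we could show.''
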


\subsection{Minimally Randomized Hashing}
Many
 hashing algorithms --- for instance,
CityHash~\cite{cityhash} ---
 rely on a small random \emph{seed}. The 64-bit version of CityHash takes a 64-bit integer as a seed. 
Thus, we effectively have a family of $2^{64}$~hash functions --- one for each possible seed value.

Given such a small family (i.e., given few random bits), we can prove that it must have high collision probabilities.
Indeed, consider the set of all strings of $m$~64-bit words. There are $2^{64 m}$ such strings. 
\begin{itemize}

\item Pick one hash function from the CityHash family. 
This function hashes every one of the $2^{64 m}$~strings to one of $2^{64}$~hash values. By a pigeonhole argument~\cite{Ros128}, there must be at least one hash value where  at least $2^{64 m}/2^{64}=2^{64 (m-1)}$~strings collide.
\item  Pick another hash function. Out of the $2^{64 (m-1)}$~strings colliding when using the first hash function, we must have $2^{64 (m-2)}$~strings also colliding when using the second hash function. 
\end{itemize}
We can repeat this process $m-1$~times  until we find  
$2^{64}$~strings  colliding when using any of these  
$m-1$~hash functions.  If an adversary picks  
any two of our $2^{64}$~strings and we pick the hash function at random in the whole family of $2^{64}$~hash functions, we get a collision with a probability of at least  $(m - 1)/2^{64}$.
 Thus, while we do not have a strict bound on the collision probability of the CityHash family, we know just from the small size of its seed that it must have a relatively high collision probability for long strings. 
In contrast,  \textsc{VHASH} and 
our
\textsc{CLHASH} 
(see \S~\ref{sec:clhash})
use more than 64~random bits and have correspondingly better collision bounds (see Table~\ref{table:comparison}).

\section{\textsc{VHASH}}

The \textsc{VHASH} family~\cite{dai2007vhash,vhashimpl} was designed
for 64-bit processors.  By default, it operates over 64-bit words. Among hash families offering good almost universality for large data inputs, \textsc{VHASH} might be the fastest 64-bit alternative on x64 processors --- except for our own proposal (see \S~\ref{sec:clhash}).  

 \textsc{VHASH} is $\epsilon$-almost $\Delta$-universal and builds on the 128-bit NH family~\cite{dai2007vhash}:
\begin{equation}
  \begin{aligned}
\label{eqn:nh}
\mathrm{NH}(s)=\sum_{i=1}^{l/2}\big (&(s_{2i-1}+k_{2i-1} \bmod 2^{64}) \\
&\times (s_{2i}+k_{2i} \bmod 2^{64})\big ) \bmod 2^{128}.
\end{aligned}
\end{equation} 
NH is $1/2^{64}$-almost $\Delta$-universal with hash values in $[0,2^{128})$. Although the NH family is defined only for inputs containing an even number of components, we can extend it to include
odd numbers of components by padding the input with a zero component.

 We can summarize \textsc{VHASH} (see Algorithm~\ref{alg:vhash}) as follows:
\begin{itemize}
\item 
 NH is used  to generate a 128-bit hash value for each block of 16~words. The result is $1/2^{64}$-almost $\Delta$-universal on each block.  
 \item These hash values are mapped to a value 
 in $[0,2^{126})$ by applying a modular reduction. 
These reduced hash values are then aggregated with a polynomial hash and finally reduced to a 64-bit value.
\end{itemize}
In total, the \textsc{VHASH} family is $1/2^{61}$-almost $\Delta$-universal over  $[0,2^{64}-257)$  for input strings of  up to $2^{62}$~bits~\cite[Theorem~1]{dai2007vhash}.

For long input strings, we expect that much of the running time of \textsc{VHASH} is in the computation of NH on blocks of 16~words. On recent x64 processors, this computation involves 8~multiplications using the \texttt{mulq} instruction (with two 64-bit inputs and two 64-bit outputs). For each group of two consecutive words ($s_i$ and $s_{i+1}$), we also need two 64-bit additions. To sum
all results, we need 7~128-bit additions that can be implemented using two 64-bit additions (\texttt{addq} and \texttt{adcq}). All of these operations have a throughput of at least 1~per cycle on  Haswell processors.
We can expect NH and, by extension, VHASH to be fast.

\textsc{VHASH} uses only 16~64-bit random integers for the NH family. As in \S~\ref{sec:hashingtuples}, we only need one specific NH function irrespective of the length of the string. \textsc{VHASH} also uses a 128-bit random integer $k$ and two more 64-bit random integers $k'_1$ and $k'_2$. Thus \textsc{VHASH} uses slightly less than 160~random bytes.

\begin{algorithm}
\caption{\textsc{VHASH} algorithm}\label{alg:vhash}
\begin{algorithmic}[1]\small
\REQUIRE 16 randomly picked 64-bit integers $k_1, k_2, \ldots, k_{16}$ defining a
 128-bit NH hash function (see Equation~\ref{eqn:nh}) over inputs of length 16
\REQUIRE $k$, a randomly picked element of $\{w 2^{96}+x 2^{64}+y 2^{32}+z \given \text{integers } w,x,y,z \in [0,2^{29})  \}$
\REQUIRE $k'_1, k'_2$, randomly picked integers in $[0,2^{64}-258]$
\STATE \textbf{input}: string $M$ made of $|M|$~bytes
\STATE Let $n$ be the number of 16-word blocks ($\lceil |M| / 16 \rceil$).
\STATE Let $M_i$   be the substring of $M$ from index $i$ to $i+16$, padding with zeros if needed.
\STATE Hash each $M_i$ using the NH function, labelling the result 128-bit results $a_i$ for $i=1,\ldots,n$.
\STATE Hash the resulting $a_i$ with a polynomial hash function and store the value in a 127-bit hash value $p$:  $p=k^n +a_1 k^{n-1}+\cdots + a_n + (|M| \bmod 1024) \times 2^{64} \bmod {(2^{127}-1)}$.\label{line:vhashfin}
\STATE Hash the 127-bit value $p$ down to a 64-bit value: $z=(p_1+k'_1) \times (p_2 +k'_2) \bmod {(2^{64}-257)}$, where $p_1 = p \div (2^{64}-2^{32)}$ and $p_2 = p \bmod  (2^{64}-2^{32})$.
\RETURN the 64-bit hash value $z$
\end{algorithmic}
\end{algorithm}

\subsection{Random Bits}

Nguyen and  Roscoe showed
that
at least $\log (m/\epsilon)$~random bits are required~\cite{Ros128},\footnote
{In the present paper, $\log n$ means $\log_2 n$.} where $m$ is the maximal string length in bits and $\epsilon$ is the collision bound.
For \textsc{VHASH}, the string length is limited to $2^{62}$~bits and the collision bound is $\epsilon = 1/2^{61}$. Therefore, 
for hash families offering the bounds of \textsc{VHASH}, we have that
$\log m/\epsilon = \log (2^{62} \times 2^{61})=123$~random bits are required.

That is,  16~random bytes are theoretically required to achieve the same collision bound as \textsc{VHASH} while many more are used (160~bytes)
This suggests that we might be able to find families using far fewer random bits while maintaining the same good bounds. 
In fact, it is not difficult to modify \textsc{VHASH} to reduce the use of random bits. It would
suffice to reduce the size of the blocks down from 16~words. We could
show that it cannot increase the bound on the collision probability by more than $1/2^{64}$.
However, reducing the size of the blocks has an adverse effect on speed. With large blocks and long strings,
most of the input is processed with the NH function before the more expensive polynomial hash
function is used. 
 Thus, there is  a trade-off between speed and the number of random bits, and \textsc{VHASH} is designed for speed on long strings.

\section{Finite Fields}

Our proposed hash family (\textsc{CLHASH}, see \S~\ref{sec:clhash}) works over a binary finite field.
For completeness, 
 we review field theory briefly, introducing (classical) results as needed for our purposes.

The real numbers form what is called a \emph{field}. A field is such that addition and multiplication are associative, commutative and distributive. We also have identity elements (0 for addition and 1 for multiplication). 
Crucially, all non-zero elements $a$ have an inverse  $a^{-1}$ 
(which is  
defined by $a \times a^{-1} = a^{-1} \times a  =1$). 

Finite fields (also called Galois fields) are fields containing a finite number of elements. All finite fields have cardinality $p^n$ for some prime $p$. Up to an \emph{algebraic isomorphism} (i.e., a one-to-one map preserving addition and multiplication), given a cardinality $p^n$, there is only one field (henceforth $GF(p^n)$). And for any power of a prime, there is a corresponding field.

\subsection{Finite Fields of Prime Cardinality}
\label{sec:finitefieldsprime}

It is easy to create  finite fields 
that have 
prime cardinality ($GF(p)$). 
Given $p$,  an instance of $GF(p)$ is  given by   the set of integers in $[0,p)$ with additions and multiplications completed by a modular reduction: 
\begin{itemize}
\item 
$a \times_{GF(p)} b \equiv a \times b \bmod p$
\item and  $a +_{GF(p)} b \equiv  a+b \bmod p$.
\end{itemize}
The numbers 0 and 1 are the identity elements. Given an element $a$, its additive inverse is $p-a$. 

It is not difficult to check that all non-zero elements have a multiplicative inverse. We review this classical result for completeness.  Given a non-zero element $a$ and two distinct $x, x'$, we have that $a x \bmod p \neq a x' \bmod p$ because $p$ is prime.  Hence, starting with a fixed non-zero element $a$, we have that the set  $\{a x \bmod p \, \mid \, x \in[0,p) \} $ has cardinality $p$ and must contain 1; thus, $a$ must have a multiplicative inverse.

\subsection{Hash Families in a Field}
\label{sec:hasfamexampleinfield}
Within a field, we can easily construct hash families having strong theoretical guarantees, as the next lemma illustrates.

\begin{lemma}\label{lemma:simpledeltauniversal}
The family of functions of the form 
 \begin{align*}h(x)=a x\end{align*}
in a finite field ($GF(p^n)$) is $\Delta$-universal, provided that the key $a$ is picked from all values of  the field. 
\end{lemma}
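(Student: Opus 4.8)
The plan is to show directly that the family $h(x) = ax$ over $GF(p^n)$ satisfies the $\Delta$-universality condition: for any constant $c$ in the field and any distinct $x, x'$, the probability over a uniformly random key $a$ that $h(x) = h(x') + c$ is exactly $1/p^n$. Rewriting the event $ax = ax' + c$ using the field's distributivity and additive inverses, we get $a(x - x') = c$. Since $x \neq x'$, the element $x - x'$ is nonzero, and here I would invoke the defining property of a field — that every nonzero element has a multiplicative inverse — so that $(x-x')^{-1}$ exists.

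The key step is then to observe that $a(x-x') = c$ has a \emph{unique} solution $a = c (x-x')^{-1}$. Uniqueness follows because if $a_1(x-x') = a_2(x-x')$ then $(a_1 - a_2)(x-x') = 0$, and a field has no zero divisors, so $a_1 = a_2$. Hence exactly one of the $p^n$ possible keys produces the collision, and the probability is $1/p^n$, which is precisely the $\Delta$-universality bound for $L$-bit hashing with $2^L$ replaced by $|GF(p^n)| = p^n$.

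I would carry out the steps in this order: (1) fix distinct $x, x'$ and an arbitrary constant $c$; (2) algebraically reduce the collision equation $ax = ax' + c$ to $a(x-x') = c$; (3) note $x - x' \neq 0$ and use the existence of its inverse to exhibit the solution $a = c(x-x')^{-1}$; (4) use the absence of zero divisors to argue this solution is unique; (5) conclude that the collision probability over a uniform random $a$ equals $1/p^n$.

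I do not anticipate a serious obstacle here — the argument is a standard one-line field-theoretic computation, essentially identical in spirit to the proof sketched in the paper for multiplicative inverses in $GF(p)$. The only point requiring a little care is to be explicit that a field has no zero divisors (which follows immediately from the existence of inverses: if $ab = 0$ with $a \neq 0$, then $b = a^{-1}(ab) = 0$), since that is what makes the solution unique rather than merely existent. One might also remark in passing that the same proof shows the family is $\Delta$-universal in the stronger sense of achieving equality $1/p^n$, not merely the inequality $\leq 1/p^n$.
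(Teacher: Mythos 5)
Your proof is correct and follows exactly the argument the paper relies on (the paper states this lemma without an explicit proof, but uses the identical ``solve uniquely for the key via the inverse of $x-x'$'' reasoning for its multilinear and pseudo-dot-product lemmas). Your explicit remark that the absence of zero divisors gives uniqueness, and hence collision probability exactly $1/p^n$, is a sound and slightly more careful rendering of the same standard argument.
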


 As another example, consider hash functions of the form $h(x_1, x_2,\ldots, x_m)=a^{m-1} x_1 + a^{m-2} x_2+\cdots+ x_m$ where $a$ is picked at random (a \emph{random input}). 
Such \emph{polynomial hash functions} can be computed efficiently using Horner's rule: starting with $r=x_1$, compute $r\leftarrow a r + x_i$ for $i=2, \ldots, m$. Given any two distinct inputs, $x_1, x_2,\ldots, x_m$ and $x'_1, x'_2,\ldots, x'_m$, we have that $h(x_1, \ldots, x_m)-h(x'_1, \ldots, x'_m)$ is a non-zero polynomial of degree at most $m-1$ in $a$. By the fundamental theorem of algebra, we have that it is zero for at most $m-1$~distinct values of $a$. 
Thus we have that the probability of a collision is bounded by $(m-1)/p^n$ where $p^n$ is the cardinality of the field. For example, \textsc{VHASH} uses polynomial hashing with $p=2^{127}-1$
and $n=1$.

We can further reduce the collision probabilities if we use $m$~random inputs $a_1, \ldots, a_m$ picked  in the field to compute a \emph{multilinear} function: $h(x_1, \ldots, x_m)=a_1 x_1 + a_2 x_2 + \cdots + a_m x_m$.  We have $\Delta$-universality.  Given two distinct inputs, $x_1, \ldots, x_m$ and $x'_1, \ldots, x'_m$, we have that $x_i \neq x'_i$ for some $i$. Thus  we have that 
$h(x_1, \ldots, x_m)=c+h(x'_1, \ldots, x'_m)$  if and only if 
$a_i = (x_i - x'_i)^{-1} (c + \sum_{j\neq i} a_j (x'_j -x_j))$. 

 If $m$ is even, we can get the same bound on the collision probability with half the number of multiplications~\cite{703969,Lemire10072013,motzkin1955evaluation}: 
 \begin{align*}&h(x_1, x_2,\ldots, x_m)\\&=(a_1 + x_1) ( a_2 + x_2) + \cdots + (a_{m-1} + x_{m-1}) (a_m + x_m).\end{align*}
 The argument is similar.
 Consider that \begin{align*}& (x_i+a_i) (a_{i+1}+x_{i+1})
 -  (x'_i+a_i) (a_{i+1}+x'_{i+1})\\
& = a_{i+1} (x_i-x'_i) +a_i (x_{i+1}-x'_{i+1})+x_{i+1} x_i +x'_i x'_{i+1}.\end{align*}
 Take two distinct inputs, $x_1, x_2,\ldots, x_m$ and $x'_1, x'_2,\ldots, x'_m$.
 As before, we have that $x_i \neq x'_i$ for some $i$. Without loss of generality, assume that $i$ is odd; then we can 
find a unique solution
for $a_{i+1}$:
to do this, start 
 from $h(x_1, \ldots, x_m)=c+h(x'_1, \ldots, x'_m)$ and solve for  $a_{i+1} (x_i-x'_i)$ in terms of an expression that does not depend on $a_{i+1}$. Then use the fact that $x_i-x'_i$ has an inverse. 
 This shows that the collision probability is bounded by $1/p^n$ and we have $\Delta$-universality.

\begin{lemma}\label{lemma:halfisdeltauniversal}
Given an even number $m$, the family of functions of the form 
 \begin{align*}h(x_1, x_2,\ldots, x_m)=&(a_1 + x_1) ( a_2 + x_2) \\
  & + (a_3 + x_3) ( a_4 + x_4) \\
  & + \cdots \\
 & + (a_{m-1} + x_{m-1}) (a_m + x_m)\end{align*}
in a finite field ($GF(p^n)$) is $\Delta$-universal, providing that the keys $a_1, \ldots, a_m$ are picked from all values of  the field. 
In particular, the collision probability between two distinct inputs is bounded by $1/p^n$.
\end{lemma}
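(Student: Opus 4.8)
The plan is to reduce the claim to an analysis of the difference $h(x_1,\dots,x_m) - h(x'_1,\dots,x'_m)$ for two arbitrary distinct inputs, exactly mirroring the argument already sketched in the paragraph preceding Lemma~\ref{lemma:halfisdeltauniversal}. First I would fix distinct inputs $(x_1,\dots,x_m)$ and $(x'_1,\dots,x'_m)$ and an arbitrary constant $c$, and observe that since the tuples are distinct there is an index $i$ with $x_i \neq x'_i$. Since the hash is a sum of products over consecutive pairs, $i$ lies in exactly one pair, which is either $(2k-1,2k)$ with $i$ odd or $(2k-1,2k)$ with $i$ even; without loss of generality I would treat the case $i$ odd (the even case is symmetric, swapping the roles of the two coordinates in the pair). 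All the pairs other than the one containing $i$ contribute identically to $h(x)$ and $h(x')$ since those coordinates agree, so they cancel in the difference.

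Next I would expand the single surviving pair. Writing $j = i+1$, the quantity
\begin{align*}
(x_i + a_i)(x_j + a_j) - (x'_i + a_i)(x'_j + a'_j)
\end{align*}
— here $x'_j = x_j$ since $j \neq i$ and $j$ cannot be the differing index when $i$ is the differing index in the pair — simplifies, using the identity displayed in the text, to an affine expression in $a_j$ whose coefficient of $a_j$ is $x_i - x'_i$, together with terms not involving $a_j$. Setting the difference equal to $c$ and rearranging gives an equation of the form $a_j (x_i - x'_i) = (\text{expression in } a_i \text{ and the fixed inputs, but not } a_j)$. Because $x_i - x'_i$ is a nonzero element of the field $GF(p^n)$, it is invertible (fields have multiplicative inverses for all nonzero elements, as recalled in \S~\ref{sec:finitefieldsprime}), so this equation determines $a_j$ uniquely as a function of $a_i$ and the inputs.

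I would then finish by conditioning on $a_i$: for each of the $p^n$ possible values of $a_i$, there is exactly one value of $a_j$ making the collision-with-offset equation hold, and all the keys $a_1,\dots,a_m$ are drawn independently and uniformly from the $p^n$ elements of the field. Hence $P\big(h(x) = c + h(x')\big) = \sum_{a_i} P(a_i)\cdot P(a_j = \text{the unique solution}) = \sum_{a_i} P(a_i) \cdot p^{-n} = p^{-n}$. Since $c$ was arbitrary, this establishes $\Delta$-universality with collision bound $1/p^n$.

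The only real subtlety — and the step I would be most careful about — is the bookkeeping of which coordinate in the differing pair one "solves for." If $i$ is odd one solves for $a_{i+1}$; if $i$ is even one solves for $a_{i-1}$; in either case one needs the \emph{other} coordinate of that pair to be non-differing so that its value is the same in $x$ and $x'$, which is automatic because a pair contains only one differing index. A secondary point worth stating explicitly is that the coefficient of the key being solved for is precisely $x_i - x'_i$ (not some combination that might vanish), which is exactly why nonzeroness of $x_i - x'_i$ suffices; this drops out of the displayed bilinear identity. Everything else is routine field arithmetic and a conditioning argument, so I would keep the write-up short and lean on the identity already given in the text.
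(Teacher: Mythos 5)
Your overall strategy is the paper's own: isolate a pair containing a differing coordinate, observe that the collision equation is affine in the partner key $a_{i+1}$ with coefficient $x_i - x'_i$, and use invertibility of that nonzero field element. But there is a genuine error in your bookkeeping. You assert that all pairs other than the one containing $i$ ``contribute identically to $h(x)$ and $h(x')$ since those coordinates agree,'' and that $x'_j = x_j$ ``because a pair contains only one differing index.'' Neither claim is true: the inputs are merely assumed distinct, so any number of coordinates may differ --- including both coordinates of the pair containing $i$, and coordinates in other pairs. With $m=2$ and inputs $(1,1)$ versus $(2,2)$, your premise $x'_{i+1}=x_{i+1}$ already fails. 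The paper's displayed identity deliberately keeps $x_{i+1}$ and $x'_{i+1}$ distinct for exactly this reason, and its phrase ``an expression that does not depend on $a_{i+1}$'' (rather than ``an expression in $a_i$'') covers contributions from all the other pairs.

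The repair is small but touches two steps. First, do not cancel the other pairs; instead note that $h(x)-h(x')-c$, viewed as a function of the keys, is affine in $a_{i+1}$ with coefficient $x_i-x'_i\neq 0$, while the remaining part depends on the inputs and on all keys \emph{other than} $a_{i+1}$ (possibly including keys from other pairs, not just $a_i$). Second, your conditioning must accordingly be on the whole tuple $(a_k)_{k\neq i+1}$ rather than on $a_i$ alone: for each such tuple there is exactly one admissible value of $a_{i+1}$, so uniformity and independence of the keys gives probability exactly $1/p^n$. With those corrections your write-up coincides with the argument sketched in the paper.
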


\subsection{Binary Finite Fields}
\label{sec:binaryfinitefields}

Finite fields having prime cardinality are simple (see \S~\ref{sec:finitefieldsprime}), but we would prefer to work with fields having a power-of-two cardinality (also called binary fields)  to match common computer architectures.
Specifically, we are interested in  $GF(2^{64})$ because our desktop processors
typically have
64-bit architectures.

We can implement such a field over the integers in $[0,2^L)$ by using the following two operations.
 Addition is defined as the bitwise XOR
 ($\oplus$) operation, which is fast on most computers: \begin{align*}a +_{GF(2^L)} b \equiv a \oplus b.\end{align*}
 The number 0 is the additive identity element ($a \oplus 0  = 0 \oplus a = a$), and
every number is its own additive inverse: $a \oplus a = 0$. Note that because binary finite fields use XOR as an addition, $\Delta$-universality and XOR-universality are effectively equivalent for our purposes in binary finite fields. 

 Multiplication is defined as a carry-less multiplication followed by a reduction. We use the convention that $a_i$ is the $i^{\mathrm{th}}$ least significant bit of integer $a$ and $a_i= 0$ if $i$ is larger than the most significant bit of $a$.  The $i^{\mathrm{th}}$ bit of the  carry-less multiplication $a \star b$ of $a$ and $b$ is given by
\begin{align}\label{eqn:carrylessmult}(a \star b)_i \equiv 
\bigoplus_{k=0}^i  a_{i-k} b_k\end{align}
where $a_{i-k} b_k$ is just a regular multiplication between two integers in $\{0,1\}$ and $\bigoplus_{k=0}^i$ is the bitwise XOR of a range of values.
The carry-less  product of two $L$-bit integers is a $2L$-bit integer.
We can check that the integers with $\oplus$ as addition and $\star$ as multiplication form a \emph{ring}: 
addition and multiplication are associative, commutative and distributive, and there is an additive identity element.
In this instance, the number 1 is a multiplicative identity element ($a \star 1 = 1 \star a = a$).
Except for the number 1, no number has a multiplicative inverse
in this ring.
 
Given the ring determined by $\oplus$ and $\star$, we can derive a corresponding finite field. However, just as with finite fields of prime cardinality, we need some kind of modular reduction and a concept equivalent to that of prime numbers\footnote{%
The general construction of a finite field of cardinality $p^n$ for $n>1$ is commonly explained in terms of polynomials with 
coefficients from $GF(p)$. To avoid unnecessary abstraction, we present finite fields of cardinality $2^L$ using regular $L$-bit integers. Interested readers can see Mullen and Panario~\cite{Mullen:2013:HFF:2555843}, for the 
alternative development.
}.

Let us define $\degree(x)$ to be the position of the most significant non-zero bit of $x$, starting at 0 (e.g., $\degree(1)=0$, $\degree(2)=1$, $\degree(2^j)=j$). For example, we have $\degree(x)\leq 127$ for any 128-bit integer $x$. Given any two non-zero integers $a,b$, we have that $\degree(a \star b) = \degree(a) + \degree (b)$ as a straightforward consequence of Equation~\ref{eqn:carrylessmult}. Similarly, we have that \begin{align*}\degree(a \oplus b) \leq \max (\degree(a),\degree(b)).\end{align*}

Not unlike regular multiplication, given integers $a,b$ with $b\neq 0$, there are unique integers $\alpha, \beta$ (henceforth the \emph{quotient} and the \emph{remainder}) such that  
\begin{align}a = \alpha \star b \; \oplus \; \beta \label{eqn:boring} \end{align}
where $\degree(\beta)<\degree(b)$.

 The uniqueness of the quotient and the remainder is easily shown. Suppose that there is another 
pair of values $\alpha',\beta'$ with the same property. Then $\alpha' \star b \; \oplus \; \beta' = \alpha \star b \; \oplus \; \beta$ which implies that  $(\alpha' \oplus \alpha) \star b = \beta' \oplus \beta$. However, since   $\degree(\beta' \oplus \beta)< \degree(b)$ we must have that $\alpha=\alpha'$.  From this it follows that $\beta=\beta'$, thus establishing uniqueness. 

We define $\div$ and $\bmod$ operators as giving respectively the quotient ($a\div b = \alpha$) and remainder ($a\mod b = \beta$) so that the equation 
\begin{align}a \equiv a\div b \star b \; \;\oplus \; \; a\bmod b\label{eqn:id}\end{align}
is an identity equivalent to Equation~\ref{eqn:boring}.  (To avoid unnecessary parentheses, we use the following operator precedence convention: $\star$,  $\bmod$ and $\div$ are executed first, from left to right, followed by $\oplus$.)

In the general case, we can compute $a \div b $ and $a \bmod b$ using a straightforward variation on 
the Euclidean division algorithm (see Algorithm~\ref{alg:division}) which proves the existence of the remainder and quotient. 
Checking the correctness of the algorithm is straightforward. We start initially with values $\alpha$ and $\beta$ such that $a=\alpha\star b \oplus \beta$.
By inspection, this equality is preserved throughout the algorithm. Meanwhile, the algorithm only terminates when the degree of $\beta$ is less than
that of 
 $b$, as required. And the algorithm must terminate, since the degree of $q$ is reduced by at least one each time it is updated (for a maximum of $\degree(a)-\degree(b)+1$~steps).
\begin{algorithm}
\caption{Carry-less division algorithm}\label{alg:division}
\begin{algorithmic}[1]\small
\STATE \textbf{input:} Two integers $a$ and $b$, where $b$ must be non-zero
\STATE \textbf{output:} Carry-less quotient and remainder: $\alpha = a \div b $ and $\beta = a \bmod b$, such that $a=\alpha \star b \oplus \beta$ and $\degree(\beta)<b$
\STATE Let $\alpha \leftarrow 0$ and $\beta \leftarrow a$  
\WHILE {$\degree(\beta)\geq \degree(b)$}
\STATE let $x\leftarrow 2^{\degree(\beta)-\degree(b)}$ 
\STATE $\alpha \leftarrow x\; \oplus \; \alpha$, 
$\beta \leftarrow x \star b \; \oplus \; \beta$ 
\ENDWHILE
\RETURN $\alpha$ and $\beta$
\end{algorithmic}
\end{algorithm}

Given $a = \alpha \star b \; \oplus \; \beta$ and $a' = \alpha' \star b \; \oplus \; \beta'$, we have that $a  \oplus a' = (\alpha \oplus \alpha') \star b \; \oplus \; (\beta \oplus \beta')$. Thus, 
it can be checked that  divisions and modular reductions are distributive:
\begin{align}\label{eqn:distributive}(a \oplus b) \bmod p =( a \bmod p ) \oplus ( b \bmod p),\end{align}
\begin{align}\label{eqn:distributivediv}(a \oplus b) \div p = (a \div p) \oplus (b \div p).\end{align}
Thus, we have $(a \oplus b) \bmod p = 0 \Rightarrow a \bmod p = b \bmod p $. 
Moreover, by inspection, we have that $\degree(a\bmod b)<\degree(b)$ and $\degree(a\div b)=\degree(a)-\degree(b)$.

The carry-less multiplication by a power of two is equivalent to  regular multiplication. For this reason,  a modular reduction by a power of two (e.g., $a \bmod{2^{64}}$) is just the regular integer modular reduction. Idem for division.

There are non-zero integers $a$ such that there is no integer $b$ other than 1 such that  $a\bmod b=0$; effectively $a$ is a prime number under the carry-less multiplication interpretation. These ``prime integers'' are more commonly known as \emph{irreducible polynomials} in the ring of polynomials $GF2[x]$, so we  call them \emph{irreducible} instead of prime. Let us pick such an irreducible  integer $p$ (arbitrarily) such that 
the degree of $p$ is 64. 
One such integer is
$2^{64}+2^{4}+2^{3}+2 +1$.
 Then we can finally define the multiplication operation in $GF(2^{64})$:
\begin{align*}a \times_{GF(2^{64})} b \equiv 
(a \star b) \bmod p.\end{align*}
Coupled with the addition $+_{GF(2^{64})}$ that is just a bitwise XOR, we have an implementation of the field  $GF(2^{64})$ over integers in $[0,2^{64})$. 

We call the index of the second most significant bit the \emph{subdegree}. We chose an irreducible $p$ of degree 64 having minimal subdegree (4).\footnote{This can be readily verified using a mathematical software package such as Sage or Maple.}   We  use the fact that this subdegree is small to accelerate the computation of the modular reduction in the next section.

\subsection{Efficient Reduction in $GF(2^{64})$}
\label{sec:efficient}
AMD and Intel have  introduced a fast instruction that can compute a carry-less multiplication between two 64-bit numbers, and it generates a 128-bit integer. To get the multiplication in $GF(2^{64})$, we must still reduce this 128-bit integer to a 64-bit integer. 
 Since there is no equivalent fast modular instruction,  we need to derive an efficient algorithm.  
 
 There are efficient reduction algorithms used in cryptography (e.g., from 256-bit to 128-bit integers~\cite{gueron2010efficient}), but they do not suit our purposes: we have to reduce to 64-bit integers.
 Inspired by the classical Barrett reduction~\cite{Barrett:1987:IRS:36664.36688},   
 Kne\v{z}evi\'{c} et al.\ proposed a generic modular reduction algorithm in $GF(2^n)$, using no more than two multiplications~\cite{springerlink:10.1007/978-3-540-69499-1_7}.
 We put this to good use in previous work~\cite{Lemire10072013}. However, we can do the same reduction using a single multiplication.
According to our tests, the  reduction technique presented next
is \SI{30}{\percent} faster than an optimized implementation based on  Kne\v{z}evi\'{c} et al.'s algorithm.

Let us write $p = 2^{64} \; \oplus \; r$. In our case, we have $r=2^{4}+2^{3}+2+1=27$ and $\degree(r)=4$. 
We are interested in applying a modular reduction by $p$ to  the result of the multiplication of two integers in $[0,2^{64})$, and the result of such a multiplication is an integer $x$ such that $\degree(x)\leq 127$. 
We want to compute $x \bmod p $ quickly. We begin with the following lemma.

\begin{lemma}\label{lemma:firstone}
Consider any 64-bit integer $p=2^{64}\oplus r$. 
We define the operations $\bmod$ and $\div$ as the counterparts of the carry-less multiplication $\star$ as in \S~\ref{sec:binaryfinitefields}.
Given any $x$, we have that
\begin{align*}&x \bmod p  \\ &= 
   ((z\div 2^{64}) \star 2^{64}) \bmod p \; \oplus \; z\bmod 2^{64}  \; \oplus \;  x\bmod 2^{64}
\end{align*}
where $z\equiv  (x\div 2^{64}) \star r$.
\end{lemma}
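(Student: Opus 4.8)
The plan is to apply the decomposition identity of Equation~\ref{eqn:id} twice --- once to $x$ and once to the folded value $z$ --- each time combined with the distributivity of modular reduction over XOR (Equation~\ref{eqn:distributive}) and the elementary fact that reducing by $p$ leaves unchanged any integer of degree less than $\degree(p)=64$.

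First I would split $x$ along the power of two $2^{64}$. By Equation~\ref{eqn:id}, $x = (x \div 2^{64}) \star 2^{64} \oplus (x \bmod 2^{64})$, where $\div$ and $\bmod$ by a power of two agree with the ordinary integer operations. Reducing both sides by $p$ and invoking Equation~\ref{eqn:distributive} gives $x \bmod p = ((x \div 2^{64}) \star 2^{64}) \bmod p \;\oplus\; (x \bmod 2^{64}) \bmod p$, and since $\degree(x \bmod 2^{64}) < 64 = \degree(p)$, the last summand is simply $x \bmod 2^{64}$.

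Next I would rewrite the remaining term. From $p = 2^{64} \oplus r$ we obtain $2^{64} = p \oplus r$, hence $(x \div 2^{64}) \star 2^{64} = \big((x \div 2^{64}) \star p\big) \oplus \big((x \div 2^{64}) \star r\big)$. Reducing modulo $p$ and using distributivity once more, the multiple-of-$p$ summand reduces to $0$ (by uniqueness of the remainder, any $\alpha \star p$ has remainder $0$), so $\big((x \div 2^{64}) \star 2^{64}\big) \bmod p = z \bmod p$ with $z \equiv (x \div 2^{64}) \star r$. At this stage we have $x \bmod p = z \bmod p \;\oplus\; x \bmod 2^{64}$.

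The only point that deserves attention is that $z$ need not have degree below $64$ --- its degree can reach $63 + \degree(r)$ --- so one cannot stop here and must fold again. Applying exactly the same split-and-reduce argument to $z$ in place of $x$ yields $z \bmod p = \big((z \div 2^{64}) \star 2^{64}\big) \bmod p \;\oplus\; (z \bmod 2^{64})$, and substituting this into the previous line produces the claimed identity. No step is genuinely difficult; the care needed is in tracking which terms have degree below $\degree(p)$ so that Equation~\ref{eqn:distributive} and the ``multiple of $p$ reduces to $0$'' fact are applied to the right pieces.
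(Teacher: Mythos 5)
Your proof is correct and follows essentially the same route as the paper's: the same decomposition of $x$ along $2^{64}$, the same replacement of $2^{64}$ by $r$ modulo $p$ (the paper's Fact~2), and the same second fold applied to $z$ (the paper's Fact~3), with the degree argument justifying that terms below degree $64$ survive reduction unchanged (the paper's Fact~1). Your explicit remark on why a second fold is needed --- that $\degree(z)$ can reach $63+\degree(r)$ --- is a welcome clarification the paper leaves implicit.
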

\begin{proof}
We have that
$x   = (x\div 2^{64}) \star 2^{64}   \;\oplus \; x\bmod 2^{64} 
 $
for any $x$ by definition. Applying the modular reduction on both sides of the equality, we get
\begin{align*}x \bmod p   &= (x\div 2^{64}) \star 2^{64} \bmod p  \; \oplus \;  x\bmod 2^{64} \bmod p &\\
 &= (x\div 2^{64}) \star 2^{64} \bmod p  \; \oplus \;  x\bmod 2^{64} \\ &\text{\textit{by Fact 1} }\\
 & = (x\div 2^{64}) \star r \bmod p  \; \oplus \;  x\bmod 2^{64} \\&\text{\textit{by Fact 2} }\\
 & = z \bmod p  \; \oplus \;  x\bmod 2^{64} \\ &\text{\textit{by z's def.} }\\
  & = 
   ((z\div 2^{64}) \star 2^{64}) \bmod p \; \oplus \; z\bmod 2^{64} \\& \; \oplus \;  x\bmod 2^{64} \\&\text{\textit{by Fact 3} }
 \end{align*}
 where Facts 1, 2 and 3 are as follows:
\begin{itemize}
\item (Fact 1) For any $x$, we have that $(x \bmod {2^{64}}) \bmod p = x \bmod {2^{64}}$. 
\item  (Fact 2) 
For any integer $z$, we have that  $(2^{64}\; \oplus \; r) \star z  \bmod p  = p \star z \bmod p = 0$ and therefore
\begin{align*}2^{64}\star z  \bmod p = r \star z  \bmod p\end{align*}
by the distributivity of the modular reduction (Equation~\ref{eqn:distributive}).
\item (Fact 3) Recall that by definition $z=(z\div 2^{64}) \star 2^{64}   \; \oplus \;  z\bmod 2^{64} $. We can substitute this equation in the equation from Fact 1. For any $z$ and any non-zero $p$, we have that 
\begin{align*}
z  \bmod p & =((z\div 2^{64}) \star 2^{64}   \; \oplus \;  z\bmod 2^{64}) \bmod p\\
& = ((z\div 2^{64}) \star 2^{64}) \bmod p \; \oplus \; z\bmod 2^{64}
\end{align*}
by the distributivity of the modular reduction (see Equation~\ref{eqn:distributive}).
\end{itemize}
 Hence the result is shown. \end{proof}

Lemma~\ref{lemma:firstone} provides a formula to compute $x\bmod p$.
Computing $z= (x\div 2^{64}) \star r$ involves a carry-less multiplication, which can be done efficiently on recent Intel and AMD processors. The computation of 
$z\bmod 2^{64}$  and $ x\bmod 2^{64}$ is trivial. It remains to compute 
$((z\div 2^{64}) \star 2^{64}) \bmod p $. At first glance, we still have a modular reduction. However, we can easily memoize the result of  
  $((z\div 2^{64}) \star 2^{64}) \bmod p$.
  The next lemma shows that there are only 16~distinct values to memoize (this
follows from the low subdegree of $p$).
  \begin{lemma}
  Given that $x$ has degree less than 128, 
  there are only  16~possible values of $(z\div 2^{64}) \star 2^{64} \bmod p$, where  $z\equiv (x\div 2^{64}) \star r$ and $r=2^{4}+2^{3}+2+1$.   \end{lemma}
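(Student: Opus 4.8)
The plan is to chase degree bounds through the definitions, using the two facts established in \S\ref{sec:binaryfinitefields}: that $\degree(a \star b) = \degree(a) + \degree(b)$ for non-zero $a,b$, and that $\degree(a \div b) = \degree(a) - \degree(b)$. The key point is that $(z\div 2^{64}) \star 2^{64} \bmod p$ depends on $x$ \emph{only through} the value $z \div 2^{64}$, so it suffices to bound the number of values this quantity can take.

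First I would observe that since $\degree(x) < 128$, i.e.\ $\degree(x) \le 127$, the quotient $x \div 2^{64}$ has degree at most $127 - 64 = 63$; in particular $x\div 2^{64}$ is an integer in $[0,2^{64})$. Next, because $\degree(r) = 4$, the product $z = (x\div 2^{64}) \star r$ has degree at most $63 + 4 = 67$. Then the quotient $z \div 2^{64}$ has degree at most $67 - 64 = 3$, so $z \div 2^{64} \in \{0,1,\ldots,15\}$, a set of exactly $16$ elements. Consequently the map $x \mapsto (z\div 2^{64}) \star 2^{64} \bmod p$ factors through $z \div 2^{64}$, which ranges over at most $16$ values, so the image has at most $16$ elements, which is the claim.

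The only points requiring a little care are the degenerate cases where one of the quantities is zero (so that $\degree$ is $-\infty$ rather than a genuine bound): if $x \div 2^{64} = 0$ then $z = 0$, $z \div 2^{64} = 0$, and $(z\div 2^{64})\star 2^{64}\bmod p = 0$, which is simply one of the $16$ admissible values, so no exception arises. I expect no substantive obstacle here; the entire argument is a bookkeeping exercise with the degree function, and the substance —- that the low subdegree of $p$ (namely $\degree(r) = 4$) caps the overflow into the high half at $4$ bits —- is exactly what makes the memoization table small. One could additionally remark that this is why $p$ was chosen with minimal subdegree: a larger subdegree $d$ would give $2^{d}$ table entries instead of $16$.
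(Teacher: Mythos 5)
Your proposal is correct and follows essentially the same degree-bookkeeping argument as the paper: bound $\degree(z)\leq 127-64+4=67$, conclude $\degree(z\div 2^{64})\leq 3$, and hence that $z\div 2^{64}$ is one of the sixteen 4-bit integers. Your extra remarks on the zero case and on why the subdegree of $p$ matters are sound but not needed.
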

\begin{proof}
Indeed,  we have that 
\begin{align*}\degree(z) = \degree(x) -64 + \degree(r).\end{align*} 
Because 
$\degree(x)\leq 127$, we have that  
 $\degree(z) \leq
127 - 64 + 4  
=67$. Therefore, we have $\degree(z \div 2^{64})\leq 3$. Hence, we can represent $z \div 2^{64}$ using 4~bits: there are only 16 4-bit integers.
\end{proof}

Thus, in the worst possible case, we would need to memoize 16~distinct 128-bit integers to represent $((z\div 2^{64}) \star 2^{64}) \bmod p $. However, observe that the degree of $z\div 2^{64}$ is bounded by $\degree(x)-64+4-64\leq 127-128+4=3$ since $\degree(x)\leq 127$. By using Lemma~\ref{lemma:notsobad}, we show that each integer $((z\div 2^{64}) \star 2^{64}) \bmod p $ has degree bounded by 7 so that it can be represented using no more than 
8~bits: setting $L=64$ and $w \equiv z\div 2^{64}$, $\degree(w)\leq 3$, $\degree(r)=4$ and $\degree(w)+\degree(r)\leq 7$.

Effectively, the lemma says that if you take a value of small degree $w$, you multiply it by $2^{{L}}$ and then compute the modular reduction on the result and a value $p$ that is almost $2^{L}$ (except for a value of small degree $r$), then the result has small degree: it is bounded by the sum of the degrees of $w$ and $r$. 

  \begin{lemma}\label{lemma:notsobad} 
  Consider
$p = 2^L \; \oplus \; r$,
with  $r$ of degree less than $L$.
For any $w$, the degree of 
$w \star 2^{L} \bmod p$ is bounded by $\degree(w) + \degree(r)$.

Moreover, when $\degree(w) + \degree(r)<L$ then the degree of 
$w \star 2^{L} \bmod p$ is exactly $\degree(w) + \degree(r)$.
  \end{lemma}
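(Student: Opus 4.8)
The plan is to reduce everything to a single explicit computation of $2^{L} \bmod p$, then bootstrap to $w \star 2^{L} \bmod p$ using distributivity, and finally just read off the degree.

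First I would compute $2^{L} \bmod p$. Since addition is XOR and hence self-inverse, the defining relation $p = 2^{L} \oplus r$ gives $2^{L} = p \oplus r$. Applying the modular reduction by $p$ to both sides and using the distributivity of $\bmod$ over $\oplus$ (Equation~\ref{eqn:distributive}), together with $p \bmod p = 0$ and $r \bmod p = r$ (the latter because $\degree(r) < L = \degree(p)$), I get $2^{L} \bmod p = r$. This is the key identity.

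Next I would extend this to $w \star 2^{L}$. By distributivity of $\star$ over $\oplus$, $w \star 2^{L} = (w \star p) \oplus (w \star r)$. Since multiplying by $p$ and reducing modulo $p$ yields $0$ (i.e. $w \star p \bmod p = 0$, by uniqueness of the remainder as $\degree(0) < \degree(p)$), applying the modular reduction and Equation~\ref{eqn:distributive} once more gives $w \star 2^{L} \bmod p = (w \star r) \bmod p$. What remains is a pure degree computation. Assume $w \neq 0$ (the case $w = 0$ is trivial) and note $r \neq 0$ in our setting, so $\degree(w \star r) = \degree(w) + \degree(r)$. If $\degree(w) + \degree(r) < L$, then $w \star r$ already has degree below $\degree(p)$, hence is its own remainder: $(w \star r) \bmod p = w \star r$, which has degree exactly $\degree(w) + \degree(r)$ --- this proves the second ("moreover") claim. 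If instead $\degree(w) + \degree(r) \geq L$, then any remainder modulo $p$ has degree strictly less than $\degree(p) = L \leq \degree(w) + \degree(r)$, so the stated bound still holds.

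I expect no real obstacle here: the only care needed is to handle the XOR arithmetic cleanly (self-inverse addition, distributivity of both $\star$ and $\bmod$ over $\oplus$) and to dispatch the degenerate case $w = 0$ (and, for full generality, $r = 0$, which forces $p = 2^{L}$ and $w \star 2^{L} \bmod p = 0$). Everything substantive is a two-line computation built on the facts already established in \S\ref{sec:binaryfinitefields}.
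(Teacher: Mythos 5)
Your proof is correct, and it takes a noticeably more direct route than the paper's. The paper's argument introduces the quotient $w' = w \star 2^{L} \div p$, expands $w' \star p = w' \star 2^{L} \oplus w' \star r$, and then uses a bit-level observation (the low $L$~bits of both $w \star 2^{L}$ and $w' \star 2^{L}$ vanish) together with degree bounds to conclude that $w \star 2^{L} \bmod p = w' \star r$, finishing with $\degree(w') = \degree(w)$. You instead exploit the congruence $2^{L} \equiv r \pmod{p}$ head-on: from $w \star p = (w \star 2^{L}) \oplus (w \star r)$ and distributivity of $\bmod$ over $\oplus$ (Equation~\ref{eqn:distributive}), you get $w \star 2^{L} \bmod p = (w \star r) \bmod p$ in one line, and the degree claims then follow from $\degree(w \star r) = \degree(w) + \degree(r)$ plus uniqueness of the remainder. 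The two arguments are consistent (when $\degree(w)+\degree(r) < L$ one can check $w' = w$, so the paper's $w' \star r$ is your $w \star r$), but yours avoids the quotient and the bit-masking step entirely; it is essentially the same mechanism the paper itself uses as ``Fact~2'' in the proof of Lemma~\ref{lemma:firstone}, applied here. Your handling of the trivial case $\degree(w)+\degree(r) \geq L$ matches the paper's, and your attention to the degenerate cases $w=0$ and $r=0$ is slightly more careful than the paper, which leaves them implicit.
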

\begin{proof}
The result is trivial if  $\degree(w)+\degree(r)\geq L$, since the degree of $w \star 2^{L} \bmod p$ must be smaller than the degree of $p$. 

So let us assume that  $\degree(w)+\degree(r)<L$. By the definition of the modular reduction (Equation~\ref{eqn:id}), we have
\begin{align*}w  \star 2^{L} &= w  \star 2^{L} \div p \star p \; \;\oplus \; \; w  \star 2^{L}\bmod p.
\end{align*}
Let $w'=w  \star 2^{L} \div p$, then 
\begin{align*}
w  \star 2^{L} &= w' \star p \; \;\oplus \; \; w  \star 2^{L}\bmod p\\
 &= w' \star r  \; \;\oplus \; \; w' \star 2^{L} \; \;\oplus \; \; w  \star 2^{L}\bmod p.
 \end{align*}
The first $L$~bits of 
$w  \star 2^{L}$  and $w' \star 2^{L}$ are zero. Therefore, we have
\begin{align*}
( w' \star r  )\bmod 2^{L} &= (w  \star 2^{L}\bmod p) \bmod 2^{L}.
 \end{align*}
Moreover, the degree of $w'$ is the same as the degree of $w$: $\degree(w')=\degree(w)+\degree(2^{L})+\degree(p)=\degree(w)+L-L=\degree(w)$. Hence, we have  
 $\degree(w'\star r)=\degree(w)+\degree(r)<L$. And, of course, $\degree(w  \star 2^{L}\bmod p)<L$. Thus, we have that
\begin{align*}
w' \star r =  w  \star 2^{L}\bmod p.
\end{align*}
Hence, it follows that $\degree(w  \star 2^{L}\bmod p)= \degree(w' \star r)=\degree(w) +\degree(r)$.
\end{proof}

  Thus the memoization requires access to only 16~8-bit values.   We enumerate the values in question ($w \star 2^{64} \bmod p$ for $w = 0,1, \ldots, 15$) in Table~\ref{table:fact4}. 
 It is  convenient that  $16\times 8 =
128$~bits: 
the entire table fits in a 128-bit word. It means that if the list of 8-bit values are stored using one byte each, the SSSE3 instruction \texttt{pshufb} can be used for fast look-up. (See Algorithm~\ref{alg:modulo}.)

 \begin{table}[thb]
\caption{\label{table:fact4}Values of $w \star 2^{64} \bmod p$ for $w=0,1,\ldots, 15$ given $p=2^{64}+2^{4}+2^{3}+3$.}\centering \small
\begin{tabular}{cl|cl}
\toprule
  \multicolumn{2}{c}{$w$} & \multicolumn{2}{c}{$w \star 2^{64} \bmod p$} \\
 decimal & binary & decimal & binary \\\midrule
0   & $0000_2$& 0 & $00000000_2$\\
1 & $0001_2$&27&   $00011011_2$\\
2 & $0010_2$&54 &  $00110110_2$\\ 
3 & $0011_2$&45 &  $00101101_2$\\
4 & $0100_2$&108 & $01101100_2$\\
5 & $0101_2$&119&  $01110111_2$\\
6&  $0110_2$&90&   $01011010_2$\\
7 & $0111_2$&65&   $01000001_2$\\
8& $1000_2$&216&   $11011000_2$\\
9& $1001_2$&195&   $11000011_2$\\
10& $1010_2$&238&  $11101110_2$\\
11& $1011_2$&245&  $11110101_2$\\
12& $1100_2$&180&  $10110100_2$\\
13& $1101_2$&175&  $10101111_2$\\
14& $1110_2$&130&  $10000010_2$\\
15 & $1111_2$&153&  $10011001_2$\\\bottomrule
\end{tabular}
\end{table}

\begin{algorithm}
\caption{Carry-less division algorithm}\label{alg:modulo}
\begin{algorithmic}[1]\small
\STATE \textbf{input:} A 128-bit integer $a$ 
\STATE \textbf{output:} Carry-less modular reduction $a \bmod p$ where $p=2^{64}+27$
\STATE $z \leftarrow (a\div 2^{64}) \star r$
\STATE $w \leftarrow z\div 2^{64}$
\STATE Look up $w \star 2^{64} \bmod p$ in Table~\ref{table:fact4}, store result in $y$
\RETURN $a \bmod  2^{64} \oplus z \bmod 2^{64} \oplus y$ 
\end{algorithmic}
\centering\vspace{1em}
\begin{minipage}{0.9\textwidth}

Corresponding C implementation using x64  intrinsics:
\lstset{%
keywordstyle=\color{black}\bfseries\underbar,
commentstyle=\color{gray}, 
stringstyle=\ttfamily, 
showstringspaces=false,
morekeywords={uint64_t, __m128i,uint32_t},
language=C
}
\begin{lstlisting}
uint64_t  modulo( __m128i a) {
	__m128i r = _mm_cvtsi64_si128(27);
	__m128i z = 
	  _mm_clmulepi64_si128( a, r, 0x01);
	__m128i table = _mm_setr_epi8(0, 27, 54, 
	  45, 108,119, 90, 65, 216, 195, 238, 
	  245,180, 175, 130, 153);
	__m128i y = 
	  _mm_shuffle_epi8(table
	    ,_mm_srli_si128(z,8));
	__m128i temp1 = _mm_xor_si128(z,a);
	return _mm_cvtsi128_si64(
	  _mm_xor_si128(temp1,y));
}
\end{lstlisting}
\end{minipage}
\end{algorithm}

\section{\textsc{CLHASH}}
\label{sec:clhash}
The \textsc{CLHASH} family resembles the \textsc{VHASH} family --- except that
members work in a binary finite field. The \textsc{VHASH} family has the 128-bit NH family (see Equation~\ref{eqn:nh}), but we instead use the 128-bit CLNH family:
\begin{equation}
\label{eqn:clnh}
\mathrm{CLNH}(s)=
 \bigoplus_{i=1}^{l/2}\big ((s_{2i-1}\oplus k_{2i-1}) \star (s_{2i} \oplus k_{2i} )\big ) 
\end{equation} 
where the $s_i$ and $k_i$'s are 64-bit integers and $l$ is the length of the string $s$. The formula assumes that $l$ is even: we  pad odd-length inputs with a single zero word.
When an input string $M$ is made of $|M|$~bytes, we can consider it as string of 64-bit words $s$ by padding it with up to 7~zero bytes so that $|M|$ is divisible by 8.

On x64 processors with the CLMUL instruction set, a single term $ ((s_{2i-1}\oplus k_{2i-1}) \star (s_{2i} \oplus k_{2i} ))$ 
can be computed using one 128-bit XOR instructions (\texttt{pxor} in SSE2) and one carry-less multiplication using the \texttt{pclmulqdq} instruction:
\begin{itemize}
\item load $(k_{2i-1}, k_{2i})$ in a 128-bit word,
\item load $(s_{2i-1}, s_{2i})$ in another 128-bit word,
\item compute \begin{align*}(k_{2i-1}, k_{2i}) \oplus (s_{2i-1}, s_{2i}) \equiv (k_{2i-1} \oplus s_{2i-1}, k_{2i} \oplus s_{2i})\end{align*} using one \texttt{pxor} instruction,
\item compute $(k_{2i-1} \oplus s_{2i-1}) \star (k_{2i} \oplus s_{2i})$ using one \texttt{pcl\-mul\-qdq} instruction (result is a 128-bit word).
\end{itemize}
An additional \texttt{pxor} instruction is required per pair of words to compute CLNH, since we need to aggregate the results. 

We have that the family $s \to \mathrm{CLNH}(s) \bmod p$ for some irreducible $p$ of degree 64 is XOR universal over same-length strings. Indeed, $\Delta$-universality in the field $GF(2^{64})$ follows from Lemma~\ref{lemma:halfisdeltauniversal}. However, recall that  $\Delta$-universality in a binary finite field (with operations $\star$ and $\oplus$ for multiplication and addition) is the same as XOR universality --- addition is the XOR operation ($\oplus$). 
It follows that the CLNH family must be $1/2^{64}$-almost universal for same-length strings.

Given an arbitrarily long string of 64-bit words, we can divide it up into blocks of 128~words (padding
the last block
 with zeros if needed). Each block can be hashed using CLNH and the result is $1/2^{64}$-almost universal by Lemma~\ref{lemma:hashingtuples}.
If there is a single block, we can compute $\mathrm{CLNH}(s) \bmod p$ to get an XOR universal hash value. Otherwise, the resulting 128-bit hash values $a_1, a_2, \ldots, a_n$  can then be hashed once more.
For this we use
a polynomial hash function, $k^{n-1}  a_1 + k^{n-2}  a_2+\cdots+ a_n$, for some random input $k$ 
 in some finite field.
We choose the field $GF(2^{127})$ and use the irreducible $p=2^{127}+2+1$. We compute such a polynomial hash function by using 
Horner's rule: starting with $r=a_1$, compute $r\leftarrow k \star r \oplus  a_i$ for $i = 2, 3, \ldots , n$.  
For this purpose, we need carry-less multiplications between pairs of 128-bit integers: we can achieve the desired result with 4~\texttt{pclmulqdq} instructions, in addition to some shift and XOR operations.  The multiplication generates a 256-bit integer $x$ that must be reduced. However, it is not necessary to reduce it to
a 127-bit integer  (which would be the result if we applied a modular reduction by $2^{127}+2+1$). It is enough to reduce it to a 128-bit integer $x'$
such that $x' \bmod (2^{127}+2+1) = x \bmod (2^{127}+2+1)$. We get the desired result by setting $x'$ equal to the \emph{lazy} modular reduction~\cite{bluhm2013fast}
 $x \bmod_{\mathrm{lazy}} (2^{127}+2+1)$ defined as
\begin{equation}
  \begin{aligned} \label{eq:lazymod}
&  x \bmod_{\mathrm{lazy}} (2^{127}+2+1) \\  &\equiv  x \bmod ((2^{127}+2+1)\star 2) \\
& = x \bmod (2^{128}+4+2)\\
 &= (x \bmod 2^{128}) \oplus (( x \div 2^{128} ) \star 4 \oplus  ( x \div 2^{128} ) \star 2).
  \end{aligned}
\end{equation}
It is  computationally convenient to assume that $\degree(x)\leq 256-2$ so that $\degree(( x \div 2^{128} ) \star 4)\leq 128$. We can achieve this degree bound by picking the polynomial coefficient $k$ to have $\degree(k)\leq 128-2$.
The resulting polynomial hash family is $(n-1)/2^{126}$-almost universal for strings having the same length where $n$ is the number of 128-word blocks ($\lceil |M| / 1024 \rceil$ where $|M|$ is the string length in bytes), whether we use the actual modular or the lazy modular reduction.
 
 It remains to reduce the final output $\mathcal{O}$ (stored in a 128-bit word) to a 64-bit hash value. For this purpose, we can use $s \to \mathrm{CLNH}(s) \bmod p$ with $p=2^{64}+27$ (see \S~\ref{sec:efficient}),
and where $k''$ is a random 64-bit integer.
We treat
$\mathcal{O}$
as a string containing two 64-bit words. Once more,
the reduction 
is XOR universal by an application of Lemma~\ref{lemma:halfisdeltauniversal}.
  Thus, we have the composition of three hash functions with collision probabilities $1/2^{64}$, $(n-1)/2^{126}$ and $1/2^{64}$. It is reasonable to bound the string length by $2^{64}$~bytes: $n\leq 2^{64}/1024=2^{54}$. 
We have that $ 2/2^{64} + (2^{54}-1)/2^{126} < 2.004/2^{64}$. Thus, for same-length strings, we have 
$2.004/2^{64}$-almost XOR universality.

We further ensure that the result is XOR-universal over all strings:  $P\left (h(s) = h(s') \oplus  c \right ) \leq 1/2^{64}$ irrespective of whether $|s|= |s'|$. By Lemma~\ref{lemma:fromfixedtovariable},
 it suffices to XOR the hash value with $k'' \star |M| \bmod p$ where $k''$ is a random 64-bit integer  and $|M|$ is the string length as a 64-bit integer,  and where $p=2^{64}+27$. The XOR universality follows for strings having different lengths by Lemma~\ref{lemma:simpledeltauniversal} and the equivalence between XOR-universality and $\Delta$-universality in binary finite fields.
As a practical matter, since the final step involves the same modular reduction
twice in the expression 
 $(\mathrm{CLNH}(s)\bmod p) \oplus ((k'' \star |M|) \bmod p)$, we can simplify it  to
 $(\mathrm{CLNH}(s) \oplus (k'' \star |M|)) \bmod p$, thus avoiding an unnecessary modular reduction.

Our analysis is summarized by following lemma.
 
 \begin{lemma}
 \textsc{CLHASH} is $2.004/2^{64}$-almost XOR universal over strings of up to $2^{64}$~bytes.
 Moreover, it is XOR universal over strings of no more than \SI{1}{kB}. 
 \end{lemma}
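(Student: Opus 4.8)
The plan is to read \textsc{CLHASH} as a composition of at most three stages, bound the almost-(XOR-)universality of each stage using the lemmas already proved, combine them with the composition rule for almost XOR-universal families recalled earlier, and finally fold in the length-dependent term via Lemma~\ref{lemma:fromfixedtovariable}. Throughout I would fix two distinct inputs $s\neq s'$ and split the analysis into the equal-length case and the unequal-length case, since the term $k''\star|M|\bmod(2^{64}+27)$ is appended precisely to bridge the two; it cancels when $|s|=|s'|$, so for that case it suffices to analyze the length-oblivious part.

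For equal-length inputs I would chain three claims. (i) On a single 128-word block, $\mathrm{CLNH}(\cdot)\bmod p$ for a degree-64 irreducible $p$ is XOR universal: this is $\Delta$-universality from Lemma~\ref{lemma:halfisdeltauniversal} applied in $GF(2^{64})$, together with the fact that in a binary field $\Delta$-universality and XOR-universality coincide; in particular \textsc{CLNH} on a block is $1/2^{64}$-almost universal, so by Lemma~\ref{lemma:hashingtuples} the tuple of block hashes $(a_1,\dots,a_n)$, where $n=\lceil|M|/1024\rceil$, is again $1/2^{64}$-almost universal (the zero-padded block decomposition being injective on equal-length inputs). (ii) The polynomial hash $k^{n-1}a_1+\cdots+a_n$ in $GF(2^{127})$ with $\degree(k)\le 126$ is $(n-1)/2^{126}$-almost universal: if the block tuples differ, the difference is a nonzero polynomial of degree at most $n-1$ in $k$, hence vanishes for at most $n-1$ of the $2^{126}$ admissible $k$; crucially the lazy reduction of Equation~\ref{eq:lazymod} alters the 128-bit output only by a multiple of $2^{127}+2+1$, so it does not change this bound. (iii) Reducing the 128-bit word $\mathcal{O}$ to 64 bits by treating it as two 64-bit words and applying $\mathrm{CLNH}(\cdot)\bmod(2^{64}+27)$ is once more XOR universal by Lemma~\ref{lemma:halfisdeltauniversal}. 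Since the composition rule requires only the outermost family to be almost XOR universal and the inner ones merely almost universal, composing (i) then (ii) is $\bigl(1/2^{64}+(n-1)/2^{126}\bigr)$-almost universal, and composing with (iii) yields $\bigl(2/2^{64}+(n-1)/2^{126}\bigr)$-almost XOR universality for equal-length inputs; bounding $n\le 2^{64}/1024=2^{54}$ gives $2/2^{64}+(2^{54}-1)/2^{126}<2.004/2^{64}$.

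For unequal-length inputs the term $k''\star|M|\bmod(2^{64}+27)$ does the work: the family $\ell\mapsto k''\star\ell\bmod(2^{64}+27)$ is $\Delta$-universal, equivalently XOR universal, by Lemma~\ref{lemma:simpledeltauniversal}, so conditioning on the independent part of the key and applying Lemma~\ref{lemma:fromfixedtovariable} bounds the collision-under-XOR probability by $\max(2.004/2^{64},1/2^{64})=2.004/2^{64}$, proving the first assertion. For the second assertion, an input of at most \SI{1}{kB} fits in a single block, so stages (ii) and (iii) disappear and the hash is simply $\bigl(\mathrm{CLNH}(s)\oplus(k''\star|M|)\bigr)\bmod(2^{64}+27)$; here the fixed-length part is genuinely XOR universal and the length part is XOR universal, so the XOR-universal version of Lemma~\ref{lemma:fromfixedtovariable} gives XOR universality over all inputs of that size.

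I expect the main obstacle to be bookkeeping rather than depth: ensuring that it is XOR-universality, not merely plain universality, that propagates through the composition — which it does, because the composition rule asks only the last stage to be almost XOR universal — and verifying that the lazy modular reduction in the polynomial stage truly does not inflate the collision bound even though its 128-bit output is not the canonical 127-bit residue. Minor care is also needed to confirm that distinct equal-length inputs yield distinct padded block tuples, and that the specific constants ($p=2^{64}+27$, the restriction $\degree(k)\le 126$, the block size $1024$) are exactly those that make the degree arithmetic and the counting go through.
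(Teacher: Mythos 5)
Your proposal is correct and follows essentially the same route as the paper: the same three-stage decomposition (CLNH on 128-word blocks via Lemmas~\ref{lemma:halfisdeltauniversal} and~\ref{lemma:hashingtuples}, the degree-counting bound $(n-1)/2^{126}$ for the polynomial stage in $GF(2^{127})$ with the lazy reduction, and a final XOR-universal CLNH reduction), combined with the composition rule that needs only the outermost stage to be almost XOR-universal, the bound $n\leq 2^{54}$, and Lemmas~\ref{lemma:fromfixedtovariable} and~\ref{lemma:simpledeltauniversal} for the length term. The caveats you flag (injectivity of the padded block decomposition, the lazy reduction not affecting the bound) are exactly the points the paper also relies on, so no gap.
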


The bound of the collision probability of \textsc{CLHASH} for long strings ($2.004/2^{64}$) is 4~times lower than  the  corresponding \textsc{VHASH} collision probability  ($1/2^{61}$).
For short strings (\SI{1}{kB} or less), \textsc{CLHASH} has a bound that is 8~times lower. 
See Table~\ref{table:comparison} for a comparison.
\textsc{CLHASH} is given by Algorithm~\ref{alg:clhash}.

\begin{table}
\caption{\label{table:comparison} Comparison between the two 64-bit hash families  \textsc{VHASH} and \textsc{CLHASH}}\centering
 \begin{tabular}{c|cc}\toprule
  & universality & input length \\\midrule 
 \textsc{VHASH} & $\frac{1}{2^{61}}$-almost $\Delta$-universal & 1--$2^{59}$~bytes \\[3ex]
  \multirow{2}{*}{\textsc{CLHASH}}
  &  XOR universal & 1--1024~bytes \\[1.5ex] 
  &    $\frac{2.004}{2^{64}}$-almost XOR universal & 1025--$2^{64}$~bytes \\ \\[1.5ex] \bottomrule
 \end{tabular}
\end{table}


\begin{algorithm}
\caption{\textsc{CLHASH} algorithm: all operations are carry-less, as per \S~\ref{sec:binaryfinitefields}. The $\gg$ operator indicates a left shift: $\mathcal{O} \gg 33$ is the value $\mathcal{O}$ divided by $2^{33}$.
}\label{alg:clhash}
\begin{algorithmic}[1]\small
\REQUIRE 128 randomly picked 64-bit integers $k_1, k_2, \ldots, k_{128}$ defining a
 128-bit CLNH hash function (see Equation~\ref{eqn:clnh}) over inputs of length 128
\REQUIRE $k$, a randomly picked 126-bit integer
\REQUIRE $k'$, a randomly picked 128-bit integer
\REQUIRE $k''$, a randomly picked 64-bit integer
\STATE \textbf{input}: string $M$ made of $|M|$~bytes
\IF {$|M|\leq 1024$}
\STATE $\mathcal{O}\leftarrow \mathrm{CLNH}(M) \oplus (k'' \star |M|) \bmod (2^{64}+27) $  
\RETURN  $\mathcal{O} $
\ELSE
\STATE Let $n$ be the number of 128-word blocks ($\lceil |M| / 1024 \rceil$).
\STATE Let $M_i$   be the substring of $M$ from index $128i$ to $128 i+127$ inclusively, padding with zeros if needed. \label{line:thirteen}
\STATE Hash each $M_i$ using the CLNH function, labelling the result 128-bit results $a_i$ for $i=1,\ldots,n$.
That is, $a_i \leftarrow \mathrm{CLNH}(M_i)$.
\STATE Hash the resulting $a_i$ with a polynomial hash function and store the value in a 128-bit hash value $\mathcal{O}$:  $\mathcal{O}\leftarrow a_1 \star k^{n-1}\oplus \cdots \oplus  a_n  \bmod_{\mathrm{lazy}} (2^{127}+2+1) $ (see Equation~\ref{eq:lazymod}).
\STATE Hash the 128-bit value $\mathcal{O}$, treating it as two 64-bit words ($\mathcal{O}_1, \mathcal{O}_2$), down to a 64-bit  CLNH hash value (with the addition of a term accounting for the length $|M|$ in bytes)
\begin{align*}z\leftarrow ((\mathcal{O}_1\oplus k'_1) \star (\mathcal{O}_2 \oplus k'_2 ) \oplus (k'' \star |M|) \bmod{ (2^{64}+27)}.\end{align*}
Values $k'_1$ and $k'_2$ are the two 64-bit words contained in $k'$.
\RETURN the 64-bit hash value $z $
\ENDIF
\end{algorithmic}
\end{algorithm}

\subsection{Random Bits}

One might wonder whether using \SI{1}{kB} of random bits
is necessary.
For strings of no more than \SI{1}{kB}, \textsc{CLHASH} is XOR universal. Stinson showed
that in such cases, we need the number of random bits to match the input length~\cite{stinson1996connections}. 
That is, we need at least \SI{1}{kB} to achieve XOR universality over strings having \SI{1}{kB}.
Hence,  \textsc{CLHASH} makes nearly optimal use of the random bits.

\section{Statistical Validation}
\label{sec:statvalid}
Classically, hash functions have been deterministic: fixed maps $h$ from $U$ to $V$, where $|U| \gg |V|$ and
thus collisions are inevitable.
Hash functions might be assessed according to whether their
outputs are distributed evenly, i.e., whether $|h^{-1}(x)| \approx |h^{-1}(y)|$ for two
distinct $x,y \in V$.
However, in practice, the actual input is likely to
consist of clusters of nearly identical keys~\cite{knuth:v3}: for instance,  symbol table
entries such as \texttt{temp1}, \texttt{temp2}, \texttt{temp3} are to be expected,
or a collection of measured data values is likely to contain clusters of similar
numeric values.  Appending an extra character to the end of an input string, or
flipping a bit in an input number, should (usually) result in a different hash value.
A collection of desirable properties can be defined, and then hash functions rated on their
performance on data that is meant to represent realistic cases. 

One common use of randomized hashing is to avoid 
DoS (denial-of-service) 
attacks when an
adversary controls the series of keys submitted to a hash table.  In this setting,
prior to the use of a hash table, a random selection of hash function is made from
the family.  The (deterministic) function is then used, at least until the number
of collisions is observed to be too high. A high number of collisions presumably indicates the hash table
needs to be resized, although it could indicate that an undesirable member of the
family had been chosen.    Those contemplating switching from deterministic hash
tables to randomized hash tables would like to know that typical performance would
not degrade much.  Yet, as carefully tuned deterministic functions can sometimes outperform
random assignments for typical inputs~\cite{knuth:v3}, some degradation might need 
to be tolerated.  Thus, it is worth checking a few randomly chosen members of
our \textsc{CLHASH}  families against statistical tests.

\subsection{SMHasher}

The SMHasher program~\cite{smhasher} includes a variety of quality tests on
a number of minimally randomized hashing algorithms, for which we have weak or
no known theoretical guarantees.  It runs several statistical tests, such as the following.
\begin{itemize}
\item Given a randomly generated input, changing a few bits at random should not generate a collision.
\item Among all inputs containing only two non-zero bytes (and having a fixed length in $[4,20]$), collisions should be unlikely (called the \emph{TwoBytes} test).
\item  Changing a single bit in the input should change  half the bits of the hash value, on average~\cite{estebanez2006evolving} (sometimes called the \emph{avalanche effect}). 
\end{itemize}
Some of these tests are demanding:  e.g., CityHash~\cite{cityhash} fails the \emph{TwoBytes} test.

 
 We added both \textsc{VHASH} and \textsc{CLHASH} to SMHasher and
 used the
Mersenne Twister (i.e., MT19937) to generate the random bits~\cite{matsumoto1998mtd}.
 We find that \textsc{VHASH}  passes all tests. However,
 \textsc{CLHASH} fails one of them: the avalanche test.
 We can illustrate the failure.
Consider that for short fixed-length strings (8~bytes or less),  \textsc{CLHASH} is effectively equivalent to a hash function of the form $h(x) = a 
 \star x \bmod p$, where $p$ is irreducible. Such hash functions form an XOR universal family. They also satisfy the identity  $h(x \oplus y) \oplus h(x) = h(y)$. It follows that no matter what value $x$ takes, modifying the same $i^{\mathrm{th}}$~bit modifies the resulting hash value in a consistent manner (according to $h(2^{i+1})$). We can still expect that changing a bit in the input changes  half the bits of the hash value on average. However, SMHasher checks that $h(x \oplus 2^{i+1})$ differs from $h(x)$ in any given bit about half the time over  many randomly chosen inputs $x$. Since $h(x \oplus 2^{i+1})\oplus h(x)$ is independent from $x$ for short inputs with  \textsc{CLHASH}, any given bit is either always flipped (for all $x$) or never. Hence,  \textsc{CLHASH} fails the SMHasher test.  

Thankfully, we can slightly modify CLHASH so that all tests pass if we so desire.
It suffices to apply an  additional \emph{bit mixing} function taken from  MurmurHash~\cite{smhasher} to the result of \textsc{CLHASH}. The function consists of two multiplications and three shifts over 64-bit integers:
\begin{align*}
x & \leftarrow   x \oplus (x \gg 33),\\
x & \leftarrow   x \times 18397679294719823053,\\
x & \leftarrow   x \oplus (x \gg 33),\\
x & \leftarrow   x \times 14181476777654086739,\\
x & \leftarrow   x \oplus (x \gg 33).
\end{align*}
Each step is a bijection: e.g., multiplication
 by an odd integer is always invertible. A bijection  does not affect collision bounds.

\section{Speed Experiments}

We implemented a performance benchmark in C and compiled our software using GNU GCC~4.8 with  the \texttt{-O2} flag. 
The benchmark program ran on  a Linux server with an Intel i7-4770 processor running at \SI{3.4}{GHz}.
This CPU  has \SI{32}{kB} of L1 cache, \SI{256}{kB} of L2 cache per core,
and \SI{8}{MB} of L3 cache shared by all cores.
The machine has \SI{32}{GB} of RAM (DDR3-1600 with double-channel). We disabled Turbo Boost and set the processor
to run only
at its highest clock speed, effectively disabling the processor's power management. 
All timings are done using the time-stamp coun\-ter (\texttt{rdtsc})~instruction~\cite{intelbenchmark}.
Although all our software\footnote{Our
benchmark software
is made
freely available
under a liberal open-source license (\url{https://github.com/lemire/StronglyUniversalStringHashing}), 
and it includes the
modified SMHasher as well as all the necessary software to reproduce our results.}
is single-threaded, we disabled hyper-threading as well.

Our experiments compare implementations of \textsc{CLHASH}, \textsc{VHASH}, SipHash~\cite{siphash2012}, GHASH~\cite{gueron2010efficient} and Google's CityHash.
\begin{itemize}
\item
We implemented \textsc{CLHASH} using Intel intrinsics.
As described in \S~\ref{sec:clhash}, we use various \emph{single instruction, 
multiple data} (SIMD) instructions (e.g., SSE2, SSE3 and SSSE3) in addition to the CLMUL instruction set.  
The random bits are stored consecutively in memory, aligned with a cache line (64~bytes). 
\item
For \textsc{VHASH}, we used the authors' 64-bit implementation~\cite{vhashimpl}, which is optimized with inline assembly. It stores the random bits 
in a C \texttt{struct}, and we do not include the overhead of constructing this \texttt{struct} in the timings. The authors assume that the input length is divisible by 16~bytes, or padded with zeros to the nearest 16-byte boundary. In some instances, we would need to copy part of the input to a new location prior to hashing the content to satisfy the requirement. Instead, we decided to optimistically hash the data in-place without copy. Thus, we slightly overestimate the speed of the  \textsc{VHASH} implementation --- especially on shorter strings. 
\item We used the reference C implementation of SipHash~\cite{siphash}. SipHash is a fast family of 64-bit pseudorandom hash functions adopted, among others, by the Python language.
\item CityHash is commonly used in applications where high speed is desirable~\cite{Lim:2014:MHA:2616448.2616488,Fan:2014:CFP:2674005.2674994}.
We wrote a simple C port of Google's  CityHash (version~1.1.1)~\cite{cityhash}. 
Specifically, we benchmarked the \texttt{CityHash64WithSeed} function.  
\item Using Gueron and Kounavis'~\cite{gueron2010efficient}
code, we implemented a fast version of \textsc{GHASH} accelerated with the CLMUL instruction set. \textsc{GHASH}  is a polynomial hash function over $GF(2^{128})$ using the irreducible polynomial $x^{128}+ x^7+x^2+x+1$: $h(x_1,x_2, \ldots, x_n)= a^n x_1 + a^{n-1} x_2+\ldots + a x_n$ for some 128-bit key $a$. To accelerate computations, Gueron and Kounavis  replace the traditional Horner's rule  with an extended version that processes input words four at a time: starting with $r=0$ and precomputed powers $a^2, a^3, a^4$, compute   $r\leftarrow a^4 (r+ x_i) + a^3 x_{i+1} + a^2 x_{i+2} + a  x_{i+3} $ for $i=1, 4, \ldots, 4 \lfloor m/4 \rfloor -3$. We complete the computation with the usual Horner's rule when the number of input words is not divisible by four. In contrast with other hash functions, \textsc{GHASH} generates 128-bit hash values. 
\end{itemize}

\textsc{VHASH}, \textsc{CLHASH} and \textsc{GHASH} require random bits. The time spent by the random-number generator is excluded from the timings.

\subsection{Results}

We find that the hashing speed is not sensitive to the content of the inputs --- thus we generated the inputs using a random-number generator.
For any given input length, we repeatly hash the strings so that, in total, 40~million input words have been processed.

As a first test, we hashed \SI{64}{B}~and \SI{4}{kB}~inputs (see Table~\ref{table:results3}) and we report the number of cycles spent to hash one byte: for \SI{4}{kB}~inputs, we got 0.26 for \textsc{VHASH},\footnote{For comparison, Dai and Krovetz
reported that VHASH
used 0.6~cycles per byte on
an Intel Core 2 processor (Merom)~\cite{vhashimpl}.}
 0.16 for \textsc{CLHASH}, 0.23 for \textsc{CityHash} and 0.93 for \textsc{GHASH}. That is,  \textsc{CLHASH} is over \SI{60}{\percent} faster than \textsc{VHASH} and almost \SI{45}{\percent} faster than CityHash.  
Moreover, 
SipHash is an order of magnitude slower.  Considering that it produces 128-bit hash values, the PCMUL-accelerated \textsc{GHASH} offers good performance: it uses less than one cycle per input byte for long inputs.

Of course, the relative speeds depend on the length of the input. In Fig.~\ref{fig:comp}, we vary the input length from 8~bytes to \SI{8}{kB}. We see that the results for input lengths of \SI{4}{kB} are representative.  Mostly, we have that \textsc{CLHASH} is \SI{60}{\percent} faster than \textsc{VHASH} and \SI{40}{\percent} faster than CityHash. However, CityHash  and \textsc{CLHASH} have similar performance for  small inputs (32~bytes or less) whereas \textsc{VHASH} fares poorly over these same small inputs. We find that SipHash is not competitive in these tests. 

\begin{table}
\caption{\label{table:results3}A comparison of estimated CPU cycles per byte on a Haswell Intel processor using \SI{4}{kB}~inputs. All schemes generate 64-bit hash values, except that GHASH generates 128-bit hash values. 
}
\centering
\begin{tabular}{ccc}\hline
scheme                   & \SI{64}{B} input & \SI{4}{kB} input \\ \hline
 \textsc{VHASH} &  1.0 & 0.26 \\
 \textsc{CLHASH} & \textbf{ 0.45} & \textbf{0.16} \\ 
CityHash & 0.48  & 0.23
\\
SipHash & 3.1 & 2.1
\\
GHASH & 2.3 & 0.93
\\
\hline
\end{tabular}
\end{table}

\begin{figure*}\centering
\subfloat[Cycles per input byte]{%
\includegraphics[width=0.49\textwidth]{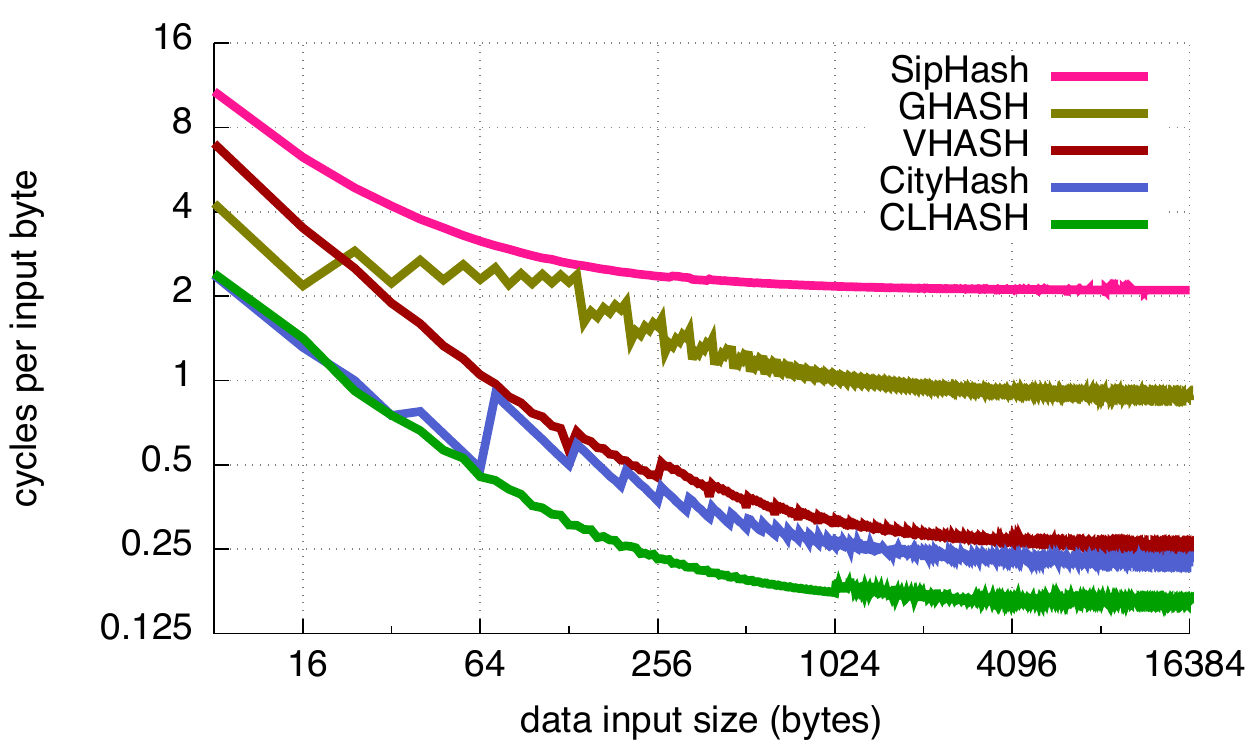}
}
\subfloat[Ratios of cycles per input byte]{%
\includegraphics[width=0.49\textwidth]{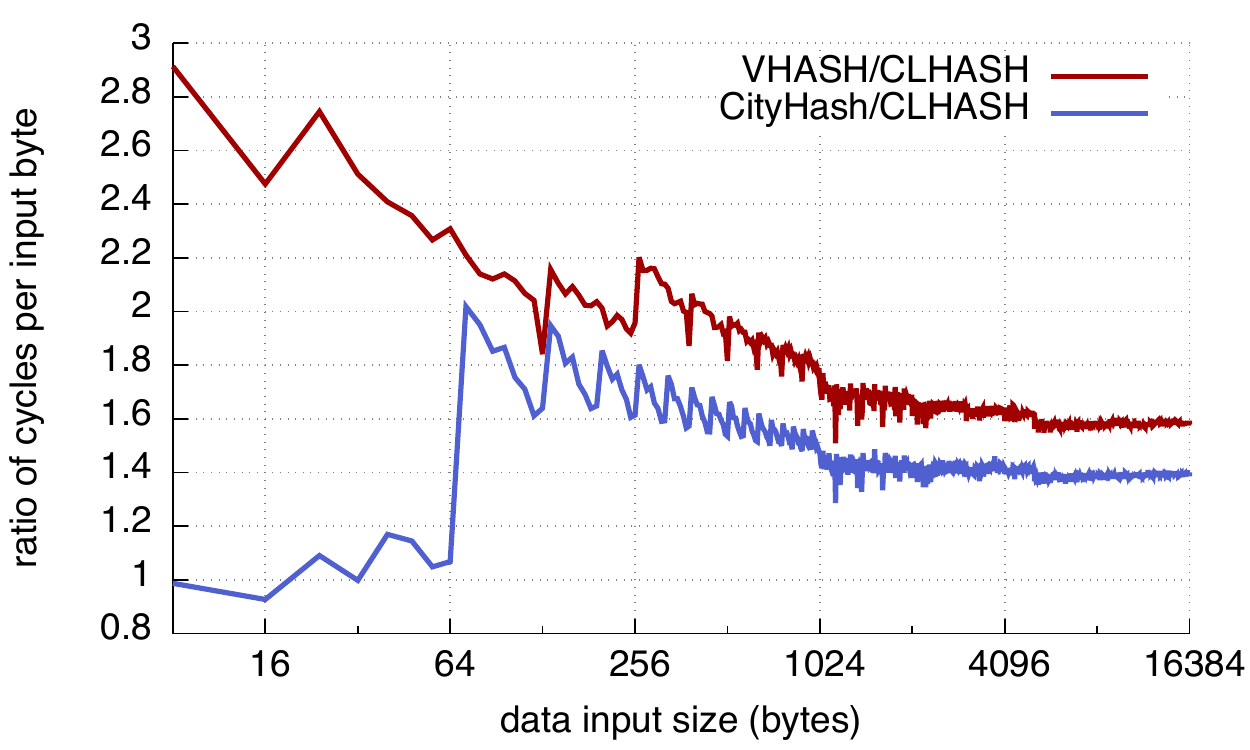}
}
\caption{\label{fig:comp}Performance comparison for various input lengths. For large inputs, \textsc{CLHASH} is faster, followed in order of decreasing speed by CityHash, \textsc{VHASH}, \textsc{GHASH} and SipHash.
}
\end{figure*}

\subsection{Analysis}
From an algorithmic point of view, \textsc{VHASH} and \textsc{CLHASH} are similar. Moreover, \textsc{VHASH} uses a conventional multiplication operation that has lower latency and higher throughput than \textsc{CLHASH}.
And the \textsc{VHASH} implementation relies on hand-tuned assembly code.
 Yet \textsc{CLHASH} is \SI{60}{\percent} faster.

For long strings, the bulk of the \textsc{VHASH} computation is spent computing the NH function.
When computing NH,  
each pair of input words (or 16~bytes) uses the following instructions: one \texttt{mulq}, three \texttt{adds} and one \texttt{adc}.
Both \texttt{mulq} and \texttt{adc} generate two micro-operations (\microops{}) each, so without counting register loading operations, we need at least $3+2\times 2 =7$~\microops{} to process two words~\cite{fog2014instruction}.
 Yet Haswell processors, like other recent Intel processors,
are apparently limited to a sustained execution of no more than
\SI{4}{\microops{}} per cycle. 
Thus we need at least $7/4$~cycles for every 16~bytes. That is, \textsc{VHASH} needs at least 0.11~cycles per byte.
 Because \textsc{CLHASH} runs at 0.16~cycles per byte on long strings (see Table~\ref{table:results3}),
 we have that no implementation of  \textsc{VHASH} could surpass our implementation of \textsc{CLHASH} by more than
 \SI{35}{\percent}.
 Simply put, \textsc{VHASH} requires too many \microops{}.

\textsc{CLHASH} is not similarly limited.
For each pair of input 64-bit words, CLNH uses two 128-bit XOR instructions (\texttt{pxor}) and one \texttt{pclmulqdq} instruction. Each \texttt{pxor} uses one (fused) \microop{} whereas the \texttt{pclmulqdq} instruction  uses two 
 \microops{} for a total of 
 \SI{4}{\microops{}}, versus the \SI{7}{\microops{}} 
absolutely needed by \textsc{VHASH}.
Thus, the number of \microops{} dispatched per cycle is less likely to be a bottleneck for \textsc{CLHASH}. However, the \texttt{pclmulqdq} instruction has a throughput of only two~cycles per instruction.
Thus, we can only process one pair of  64-bit words every two cycles, for a speed of $2/16=0.125$~cycles per byte. The measured speed (0.16~cycles per byte) is about \SI{35}{\percent} higher than this lower bound according to Table~\ref{table:results3}. This suggests that our implementation of \textsc{CLHASH} is nearly optimal --- at least for long strings.
 We verified our analysis with the IACA code analyser~\cite{intelIACA}. It reports that \textsc{VHASH} is indeed limited by the number of \microops{}  that can be dispatched per cycle, unlike \textsc{CLHASH}.

\section{Related Work}

The work that lead to the design of the \texttt{pclmulqdq} instruction by Gueron and Kounavis~\cite{gueron2010efficient} introduced efficient algorithms using this instruction, e.g., an algorithm for 128-bit modular reduction in  Galois Counter Mode. Since then, 
the \texttt{pclmulqdq} instruction has been used to speed up 
cryptographic applications.
Su and Fan find that the Karatsuba formula becomes especially efficient for software implementations of multiplication in binary finite fields due to the \texttt{pclmulqdq}  instruction~\cite{su2012impact}. Bos et al.~\cite{bos2011efficient} used the CLMUL instruction set for 256-bit hash functions on the Westmere microarchitecture.
Elliptic curve cryptography benefits from the \texttt{pclmulqdq} instruction~\cite{oliveirafast,Oliveira2014,Taverne2011}. 
Bluhm and Gueron pointed out
that the benefits are increased on the Haswell microarchitecture due to the higher throughput and lower latency of the instruction~\cite{bluhm2013fast}.

In previous work, we used the \texttt{pclmulqdq} instruction for fast 32-bit random hashing on the Sandy Bridge and Bulldozer architectures~\cite{Lemire10072013}. However, our results were disappointing, due in part to the low throughput of the instruction on these older microarchitectures.

\section{Conclusion}

The \texttt{pclmulqdq} instruction on recent Intel processors 
enables a fast and almost universal 64-bit hashing family 
(\textsc{CL\-HASH}). In terms of raw speed, the hash functions
from this family can surpass some of the  fastest 64-bit hash functions on x64 processors
(\textsc{VHASH} and CityHash). Moreover, \textsc{CLHASH} offers
superior bounds on the collision probability. 
\textsc{CLHASH} makes optimal use of the random bits, in the sense that it 
offers XOR universality for short strings (less than \SI{1}{kB}).

We believe that \textsc{CLHASH} might be suitable for many common purposes.
The \textsc{VHASH} family has been proposed for cryptographic
applications, and specifically  message authentication (VMAC): similar applications are possible for \textsc{CLHASH}. Future work should investigate these applications.

Other microprocessor architectures also support fast carry-less multiplication, sometimes referring to it as \emph{polynomial multiplication} (e.g., ARM~\cite{arm8} and Power~\cite{power2013}). Future work might review the performance of \textsc{CLHASH} on these architectures.
It might also consider the acceleration of alternative hash
families such as those based on Toeplitz matrices~\cite{stinson1996connections}.

\begin{acknowledgements}
This work is supported by the National Research Council of Canada, under grant 26143.
\end{acknowledgements}
\bibliographystyle{spmpsci}
\bibliography{clmul} 

\end{document}